\newtheorem{theorem}{Theorem}
\newtheorem{lemma}[theorem]{Lemma}
\newtheorem{corollary}[theorem]{Corollary}
\newcommand{\etal}{{et~al.}}
\newcommand{\ie}{{i.e.}}
\newcommand{\eg}{{e.g.}}
\newcommand{\RR}{\mathbb{R}} 
\newcommand{\eps}{\varepsilon}
\DeclareMathOperator{\polylog}{polylog}
\newcommand{\later}[1]{}
\newcommand{\old}[1]{}
\title{Arcs with increasing chords in $\RR^d$
   \footnote{Research partially supported
    by the Hungarian Science Foundation (NKFIH) grants  131529, 147544
   the ERC grants no. 882971, "GeoScape", and no. 101054936, "ERMid", and by the Erd\H os Center.}
}
\author{%
Adrian Dumitrescu\footnote{Algoresearch L.L.C., Milwaukee, WI 53217, USA,
  E-mail: \texttt{ad.dumitrescu@algoresearch.org}}
\and
Zsolt L\'angi\footnote{Bolyai Institute, University of Szeged, and
  Alfr\'ed R\'enyi Institute of Mathematics, Budapest, Hungary\@.
  Email: \texttt{zlangi@server.math.u-szeged.hu}}
}
\begin{document}

\maketitle

\begin{abstract}
  
  A curve $\gamma$ that connects $s$ and $t$ has the increasing chord property
  if $|bc| \leq |ad|$ whenever $a,b,c,d$ lie in that order on $\gamma$.
  For planar curves, the length of such a curve is known to be at most $2\pi/3 \cdot |st|$.
  Here we examine the question in higher dimensions and from the algorithmic standpoint
  and show the following:

  (I) The length of any $s-t$ curve with increasing chords in $\RR^d$ is at most
  $2 \cdot \left( e/2 \cdot (d+4) \right)^{d-1} \cdot |st|$ for every $d \geq 3$.
  This is the first bound in higher dimensions.

  (II)  Given a polygonal chain $P=(p_1, p_2, \dots, p_n)$ in $\RR^d$,
  where $d \geq 4$, $k =\lfloor d/2 \rfloor$,
  it can be tested whether it satisfies the  increasing chord property in
  $O\left(n^{2-1/(k+1)} \polylog(n) \right)$ expected time.
  This is the first subquadratic algorithm in higher dimensions.

\end{abstract}

\section{Introduction} \label{sec:intro}

A curve $\gamma$ has the \emph{increasing chord} property if $|bc| \leq |ad|$
whenever $a,b,c,d$ lie in that order on $\gamma$; see Fig.~\ref{fig:definition}\,(left).
In contrast, a curve $\gamma$ is said to be \emph{self-approaching} if $|bc| \leq |ac|$
whenever $a,b,c$ lie in that order on $\gamma$~\cite{IKL99}. As such, a path $\gamma$ has increasing
chords if and only if both $\gamma$ and $\gamma^R$ are self-approaching, where $\star^R$ denotes
path reversal.

Binmore~\cite{Bin71} asked whether there exists an absolute constant $c'$ such that $L \leq c'|st|$, where 
$\gamma$ is a plane curve with the increasing chord property from $s$ to $t$ and of length $L$.
Larman and McMullen~\cite{LM72} proved that one can take $c'=2 \sqrt3$, and twenty years later Rote~\cite{Rot94}
established that the value $c'=2\pi/3 = 2.094\ldots$ is the best possible. This bound is attained
by a curve consisting of two sides of a Reuleaux triangle; see Fig.~\ref{fig:definition}\,(right).

\begin{figure}[ht]
\begin{center}
\scalebox{0.7}{\includegraphics{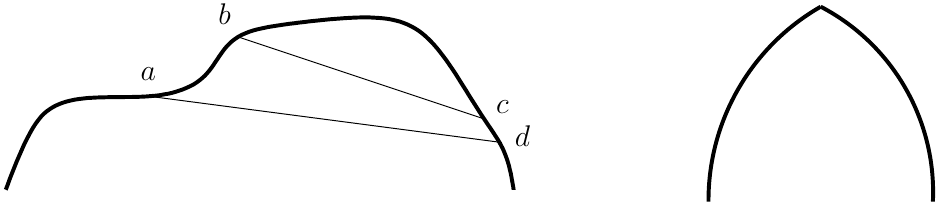}}
\caption{Left: a curve with the increasing chord property. Right: an arc consisting of two consecutive
  sides of a Reuleaux triangle.}
\label{fig:definition}
\end{center}
\end{figure}

The conjecture whether the longest curve with the increasing chord property in $\RR^d$, $d \geq 3$, 
is an arc consisting of $d$ consecutive sides of a Reuleaux simplex~\cite[G3]{CFG91} was refuted
by Rote~\cite{Rot94} already for $d=3$. His example was based on the observation that if
$S$ is a Reuleaux unit tetrahedron\footnote{A $d$-dimensional Reuleaux unit simplex
  is obtained as the intersection of $d+1$ unit balls, centered at the vertices of a regular $d$-simplex
  of unit edge length. A $3$-dimensional Reuleaux unit simplex is also called a Reuleaux unit tehrahedron.}
in $\RR^3$, then the midpoints of two disjoint edges of $S$ are at distance $\sqrt{3}-\sqrt{1/2} > 1$.
In addition, he proposed a slightly modified curve with the increasing chord property of length
about $3.087$, which, to the best of our knowledge, is the current record.
While it is not so easy to come up with a construction that works for any dimension $d \geq 3$,
an elementary computation shows that if $S$ is a Reuleaux unit simplex
in $\RR^d$, with $d \geq 3$,
then the midpoints of two disjoint edges of $S$ are at distance
$\frac{1}{d-1} \left( \sqrt{d(d+1)}-\sqrt{2} \right) > 1$, leading to a counterexample to the above conjecture
for any fixed $d \geq 3$.

In this paper we obtain an explicit upper bound on the length of a curve with the increasing
chord property in $\RR^d$ and give a  subquadratic-time algorithm for determining whether
a given polygonal chain in $\RR^d$ satisfies the increasing chord property.

\paragraph{Definitions and notations.} 
Let $\gamma : [0,1] \to \RR^d$ be a continuous curve (path) between $\gamma(0)=s$ and $\gamma(1)=t$;
we sometimes refer to it as an $s-t$ curve.
All curves discussed in this paper are assumed to be \emph{piecewise smooth}. 

Following~\cite{HK22,IKL99}, a \emph{normal} to $\gamma$ at a point $p \in \gamma$  is any hyperplane
that is included in the double wedge between the hyperplanes orthogonal to the one-sided tangents
of the two smooth pieces of $\gamma$ meeting at $p$; note that there is a unique normal at a  point
$p$ of $\gamma$ iff $\gamma$ is smooth at $p$. If $p$ is not a smooth point, we call the two hyperplanes
orthogonal to the two one-sided tangents at $p$ \emph{extremal normals}.

The length of a segment $ab$ or a vector $v$ is denoted by $|ab|$ or $|v|$, respectively. For brevity,
we use the same symbol $|\gamma|$ to denote the arc-length of the (rectifiable) curve $\gamma$. We identify
points and their position vectors; in particular, we denote the vector directed from $a$ to $b$ by $b-a$.

For any two points $a,b \in \gamma$, consider the \emph{detour} between the two points, 
namely the ratio between the length of the subpath $|\gamma(a,b)|$ and the Euclidean distance $|ab|$.
The following parameters are studied, \eg, in~\cite{CDMT21,DEK+07,NS00}.

The \emph{geometric dilation} of $\gamma$ is
\[ \delta(\gamma)=\sup_{a,b \in \gamma} \frac{|\gamma(a,b)|}{|ab|}. \]

The \emph{stretch factor} $\delta_{s,t}(\gamma)$ of $\gamma$ is defined as the detour between
the two endpoints $s$ and $t$, namely:
\[ \delta_{s,t}(\gamma) = \frac{|\gamma(s,t)|}{|st|}. \]

The $\beta$-skeleton of a set of points in $\RR^d$ is a geometric graph defined on this set,
in which two points $a,b$ are connected by an edge if no point $c$ of the set forms an angle $\angle{acb}$ with $ab$ greater than
$\arcsin(1/\beta)$ (if $\beta>1$), or $\pi - \arcsin \beta$ (if $\beta<1$)~\cite{Ep02}.

Polylogarithmic functions are defined as functions that grow at most polynomially with respect to the
logarithm of their input, often denoted as $O(\log^k n)$, for some fixed $k$, or simply $\polylog(n)$.

\paragraph{Our results.} 

In Section~\ref{sec:d-space} we prove:

\begin{theorem}\label{thm:d-space}
  Let $d \geq 3$. Let $\gamma: [0,1] \to \RR^d$ be a curve with $|\gamma(0)\gamma(1)| = 1$
  and satisfying the increasing chord property. Then the following inequalities hold.
\begin{itemize}
\item[(i)] For any $0 < \alpha < \frac{\pi}{2}$, the length of $\gamma$ is at most
\begin{equation} \label{eq:highdUB1}
|\gamma| \leq (1+\cos \alpha) \left( \frac{2-\sin \alpha}{1-\sin \alpha} \right)^{d-1}
\cdot \frac{1}{\left( \sin \alpha \right)^{\binom{d}{2}}}.
\end{equation}
\item[(ii)] In particular, the length of $\gamma$ is at most
\begin{equation} \label{eq:highdUB2}
|\gamma| \leq 2 \left( \frac{e}{2} \cdot (d+4) \right)^{d-1}.
\end{equation}
\end{itemize}
\end{theorem}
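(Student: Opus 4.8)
The plan is to bound the length of $\gamma$ by an induction on the dimension $d$, using the increasing chord property to control how far $\gamma$ can deviate from the direction of the chord $st$, and then projecting onto a hyperplane to reduce the dimension. Fix $\alpha \in (0, \pi/2)$. First I would recall the standard consequence of the increasing chord property: at every point $p$ of $\gamma$, the portion of $\gamma$ after $p$ lies in the halfspace bounded by the normal hyperplane to $\gamma$ at $p$ on the side of $t$ (this is the self-approaching condition in the forward direction), and symmetrically the portion before $p$ lies in the halfspace on the side of $s$ (the reverse curve is self-approaching). In particular $\gamma$ is contained in the ``lens'' obtained by intersecting the two halfspaces through $s$ and $t$ perpendicular to the one-sided tangents there; more usefully, this monotonicity lets one partition $\gamma$ into a bounded number of subarcs along which the tangent direction stays within angle $\alpha$ of a fixed direction.

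The key geometric step is the following dichotomy, applied to the chord $st$ (which has length $1$): either the tangent vector of $\gamma$ makes an angle at most $\frac{\pi}{2} - \alpha$ with the direction $t - s$ everywhere — in which case the length is immediately at most $|st| / \sin\alpha = 1/\sin\alpha$ by projecting the tangent onto the line $st$ — or else there is a first point $p$ where the tangent turns to angle exactly $\frac{\pi}{2}-\alpha$. At such a transition point the increasing chord property forces the remaining arc to lie in a halfspace, and the diameter of $\gamma$ in the ``transversal'' directions is controlled: since every chord of $\gamma$ has length at most $|st| = 1$ (because $s$ and $t$ are the extreme points in the self-approaching order... more precisely, one shows $\diam(\gamma) \le \frac{2 - \sin\alpha}{1-\sin\alpha}$ using the wedge containment together with the angle bound $\alpha$), the curve is trapped in a slab of bounded width. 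Projecting $\gamma$ orthogonally onto a hyperplane $H$ perpendicular to $t-s$: the projection $\gamma'$ still has the increasing chord property (orthogonal projection does not increase distances, and the monotonicity is preserved), and its endpoints are at distance at most $\frac{2-\sin\alpha}{1-\sin\alpha} \cdot \frac{1}{\text{something}}$ — here one renormalizes so that $\gamma'$ has chord length controlled by the transversal diameter bound, incurring a factor $\frac{2-\sin\alpha}{1-\sin\alpha}$. Meanwhile $|\gamma|^2 \le |\gamma'|^2 + (\text{horizontal displacement})^2$ is too crude; instead I would integrate: each tangent vector of $\gamma$ with angle $\ge \frac{\pi}{2}-\alpha$ to $t-s$ has its $H$-component at least $\sin\alpha$ times its length wherever it is ``nearly transversal,'' and one glues the ``nearly longitudinal'' part (length $\le 1/\sin\alpha \le 1/(\sin\alpha)^{\binom{d}{2}}$ trivially) to the recursively bounded transversal part, picking up one factor $\frac{2-\sin\alpha}{1-\sin\alpha}$ and one factor $\frac{1}{\sin\alpha}$ (raised to the number of pair-interactions introduced at this level, accounting for the $\binom{d}{2} = \binom{d-1}{2} + (d-1)$ bookkeeping) at each dimension drop. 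Carrying the induction from the base case $d = 2$ — where Rote's theorem gives the constant $2\pi/3 \le 1+\cos\alpha$ for suitable... actually one only needs the crude bound $|\gamma| \le \frac{1+\cos\alpha}{\sin\alpha}$ type estimate in the plane, or even just a finite bound — down through $d$ yields the product
\[
|\gamma| \le (1+\cos\alpha)\left(\frac{2-\sin\alpha}{1-\sin\alpha}\right)^{d-1} \cdot \frac{1}{(\sin\alpha)^{\binom{d}{2}}},
\]
which is (i). For (ii), I would substitute the explicit choice $\sin\alpha = 1 - \frac{1}{d+2}$ (equivalently $\cos\alpha = \sqrt{\frac{2d+3}{(d+2)^2}}$, so $1+\cos\alpha \le 2$): then $\frac{2-\sin\alpha}{1-\sin\alpha} = \frac{1 + 1/(d+2)}{1/(d+2)} = d+3$, and $(\sin\alpha)^{-\binom{d}{2}} = \left(1 + \frac{1}{d+1}\right)^{\binom{d}{2}} \le e^{\binom{d}{2}/(d+1)} = e^{d/2}$, so the bound becomes $2 \,(d+3)^{d-1} e^{d/2}$; absorbing $e^{d/2} = (e^{1/2})^{d} \le (\text{const})^{d-1}$ into the base and comparing $(d+3)^{d-1} e^{d/2}$ with $\left(\frac{e}{2}(d+4)\right)^{d-1}$ via $\frac{e}{2}(d+4) \ge e^{1/2}(d+3)$ (true for all $d \ge 3$ since $\frac{e}{2} = 1.359\ldots > e^{1/2} = 1.648\ldots$? — no: one instead balances $\sin\alpha$ slightly differently, e.g. $\frac{1}{1-\sin\alpha} = \frac{e}{2}(d+4)$-ish) completes (ii) after a short monotone comparison.

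The main obstacle I anticipate is making the dimension-reduction step genuinely lossless enough: the naive estimate $|\gamma| \le |\gamma_{\text{long}}| + |\gamma_{\text{short part}}|$ splitting by tangent direction does not directly recurse, because after projecting to $H$ the curve $\gamma'$ need not decompose along the same parameter intervals, and the renormalization of the chord length of $\gamma'$ (which could in principle be very short, even zero, if the curve returns to the line $st$) must be handled by instead bounding $|\gamma'|$ in terms of $\diam(\gamma')$ rather than the chord of its endpoints — which requires an auxiliary lemma that a curve with increasing chords of diameter $D$ in $\RR^{d-1}$ has length $O_{d}(D)$, proved by the same induction with the endpoint-chord replaced by the diameter throughout. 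Reconciling the two formulations (chord-based for the top-level statement, diameter-based for the recursion) and tracking precisely which power of $\frac{1}{\sin\alpha}$ and which power of $\frac{2-\sin\alpha}{1-\sin\alpha}$ is generated at each level so that the exponents telescope to $\binom{d}{2}$ and $d-1$ respectively is the delicate bookkeeping; everything else is the two elementary projection inequalities ($\text{length} \ge \sin\alpha \cdot \text{length of }H\text{-projection}$ is false in that direction — rather $|\gamma| \le \frac{1}{\sin\alpha}|\gamma_H|$ on the transversal part) and the diameter bound $\diam(\gamma) \le \frac{2-\sin\alpha}{1-\sin\alpha}|st|$ coming from the wedge containment.
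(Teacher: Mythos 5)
Your plan diverges from the paper's in a way that introduces genuine gaps, not just cosmetic differences.

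The paper does \emph{not} induct on dimension via a projection onto a hyperplane perpendicular to $t-s$. Instead it runs a downward induction on the number $i$ of linearly independent directions in which $\gamma$ is already known to be monotone (Lemma~\ref{lem:induction}). The base case $i=d$ is a direct parallelotope bound (Lemmas~\ref{lem:Rote1}--\ref{lem:Rote2}), where the factor $\frac{2-\sin\alpha}{1-\sin\alpha}$ and the powers of $\frac1{\sin\alpha}$ come from estimating the column norms of $Q^{-1}$ (Lemma~\ref{lem:new}), i.e.\ from the edge lengths of a parallelotope with prescribed face angles. In the inductive step the curve is projected onto the $i$-dimensional span $L$ of $q_1,\dots,q_i$; the portions of the curve making angle more than $\alpha$ with $L$ are grouped into covering intervals whose chords make angle exactly $\alpha$ with $L$, and on each covering interval the curve is monotone in one \emph{additional} direction, so the induction hypothesis with $i+1$ directions applies. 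Crucially, the projected curve $\gamma_L$ is only used for its monotonicity in $q_1,\dots,q_i$ (giving $|\gamma_L|\le C_i(\alpha)$); it is never claimed to have increasing chords.

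Three concrete problems with your route. First, your claim that the orthogonal projection onto $H\perp(t-s)$ ``still has the increasing chord property'' is unsupported and false in general: $|bc|\le |ad|$ does not survive projection because distances contract by different amounts in different directions. Second, with $H\perp(t-s)$ the endpoints $s,t$ project to the same point, so the recursion's normalizing quantity (the chord) vanishes; you notice this and try to switch to a diameter-based recursion, but the auxiliary ``diameter $D$ implies length $O_d(D)$'' lemma is not proved, and it cannot simply be proved by ``the same induction'' because the diameter of the projected curve is not controlled by any clean inequality you state. In fact, for the original curve the diameter is exactly $|st|$ (take $a=s$, $d=t$ in the definition), so your proposed bound $\diam(\gamma)\le\frac{2-\sin\alpha}{1-\sin\alpha}|st|$ is both weaker than the truth and not what generates that factor in the theorem. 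Third, your choice $\sin\alpha=\frac{d+1}{d+2}$ in part (ii) gives $\frac{2-\sin\alpha}{1-\sin\alpha}=d+3$ and an extra $e^{\Theta(d)}$ factor that does not combine into $2\bigl(\tfrac{e}{2}(d+4)\bigr)^{d-1}$ (you yourself note the comparison fails); the paper uses $\sin\alpha=\frac{d}{d+2}$, for which $\frac{2-\sin\alpha}{1-\sin\alpha}=\frac{d+4}{2}$ and $(\sin\alpha)^{-\binom d2}=(1+\tfrac2d)^{\binom d2}\le e^{d-1}$, yielding the stated bound exactly.

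In short: the idea of inducting with an angular threshold $\alpha$ and projecting is in the right spirit, but the correct recursion accumulates monotone directions rather than dropping dimensions by one perpendicular projection, and the exponents $d-1$ and $\binom d2$ fall out of the parallelotope computation, not from any diameter estimate. As written, your argument has a false lemma at its core and does not close.
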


We note that the results in Theorem~\ref{thm:d-space} are valid for any (not necessarily piecewise smooth)
curve satisfying the increasing chord property. Furthermore, for any fixed value of $d \geq 3$, by elementary calculus
it is possible to minimize the expression on the right-hand side of~\eqref{eq:highdUB1}, yielding an explicit
upper bound on the stretch factor of $\gamma$.

Turning to the algorithmic problem of determining whether a given polygonal chain
in $\RR^d$ satisfies the increasing chord property, in Section~\ref{sec:alg} we prove:

\begin{theorem} \label{thm:alg}  
  Given a polygonal chain $P=(p_1, p_2, \dots, p_n)$ in $\RR^d$, where $d \geq 4$, and $k =\lfloor d/2 \rfloor$,
  it can be determined by a randomized algorithm running in $O\left(n^{2-1/(k+1)} \polylog(n) \right)$ expected time,
  whether $P$ satisfies the increasing chord property. 
\end{theorem}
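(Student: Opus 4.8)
The plan is to reduce the increasing-chord test to a collection of geometric range-searching problems, exploiting the characterization mentioned in the introduction: a polygonal chain $P$ has increasing chords if and only if both $P$ and its reversal $P^R$ are self-approaching. So it suffices to give a subquadratic test for the self-approaching property and apply it twice. Recall that a curve is self-approaching iff for every point $b$ on the curve and every later point $c$, the point $b$ lies on or behind the normal hyperplane to the curve at $c$; equivalently (following~\cite{HK22,IKL99}), every (extremal) normal hyperplane at every point of the chain has the whole ``past'' portion of the chain on one side. For a polygonal chain this condition need only be checked at the vertices $p_1,\dots,p_n$ using the extremal normals there: for each vertex $p_j$, both extremal normal hyperplanes must separate (weakly) the prefix $p_1,\dots,p_j$ from a halfspace, i.e. all of $p_1,\dots,p_{j-1}$ must lie in the closed halfspace ``before'' $p_j$ determined by the one-sided tangent directions $p_j-p_{j-1}$ and $p_{j+1}-p_j$.

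First I would make this condition fully combinatorial: for each edge $p_ip_{i+1}$ and each vertex $p_j$ with $i < j$ (respectively $i \geq j$ for the reversed chain), the self-approaching condition is violated iff the orthogonal projection of the edge onto the line through a certain direction behaves non-monotonically; concretely it reduces to testing, for all pairs, inequalities of the form $\langle p_j - p_i,\ p_{j+1}-p_j\rangle \geq 0$ and $\langle p_i - p_j,\ p_j - p_{j-1}\rangle \le 0$ simultaneously over ranges of indices. Each such family of constraints is an instance of \emph{simplex range searching} / \emph{halfspace range searching} in $\RR^d$: we have $n$ points (the vertices $p_i$, or difference vectors) and $n$ query halfspaces (one per vertex $p_j$), and we must decide whether any point-query incidence witnesses a violation. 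The second step is therefore to invoke the standard tradeoff for halfspace/simplex range searching: with $O(m)$ preprocessing space one can answer a query in $O(n/m^{1/d} \cdot \polylog n)$ time, and the known improvement for \emph{halfspace} emptiness in $\RR^d$ uses $k = \lfloor d/2 \rfloor$ in the exponent rather than $d$ (via lifting to the $k$-level structure / Clarkson–Shor random sampling, or equivalently via the partition theorem applied to halfspace queries, giving query time $\tilde O(n^{1-1/k})$ with near-linear space, or balanced $\tilde O(n^{1-1/(k+1)})$ for $n$ queries). Balancing preprocessing against $n$ queries yields total expected time $O(n^{2 - 1/(k+1)} \polylog n)$, which is exactly the claimed bound; the randomization enters through the Clarkson–Shor sampling used to build the partition/cutting data structures, hence ``expected time.''

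The third step is to handle the non-smooth vertices carefully: at a reflex-type turn the two one-sided tangents give \emph{two} extremal normals, and the self-approaching condition must hold for the entire double wedge between them, but as noted in~\cite{HK22,IKL99} it suffices to check the two extremal normals themselves (a convexity/continuity argument: the family of normals is a segment in the dual, and the worst separating hyperplane is an endpoint). So each vertex contributes only $O(1)$ halfspace queries, and the reduction does not blow up. I would also need to check the trivial degenerate cases (consecutive collinear or coincident vertices) separately in $O(n)$ time. The main obstacle I anticipate is not the range-searching machinery — that is black-box — but rather pinning down precisely which inequalities encode the increasing-chord/self-approaching property at a non-smooth vertex and verifying that checking only vertices (rather than all interior points of edges) is sufficient; this requires the lemma that for a piecewise-linear curve, if no vertex's extremal normals are violated by earlier vertices then no interior point's normals are violated either, which follows because along an edge the relevant inner products are affine in the parameter and hence attain their extremes at edge endpoints.
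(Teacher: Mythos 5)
Your high-level strategy --- reducing the increasing-chord test to roughly $2n$ halfspace-emptiness queries, one per extremal normal per vertex, and using the $k=\lfloor d/2\rfloor$ halfspace-emptiness machinery --- matches the paper's. You also correctly note that extremal normals suffice, that vertex checks suffice (by affineness along edges), and that randomization comes from the range-searching primitives. But there is a genuine gap in the step where you ``invoke the standard tradeoff'' and balance preprocessing against $n$ queries, and it is precisely the step where all the work is.

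The difficulty is that the emptiness query at vertex $p_j$ must be asked against a \emph{prefix} $\{p_1,\dots,p_{j-1}\}$ (or, in the reversed pass, a suffix), not against the full point set. Your phrasing ``$n$ points and $n$ query halfspaces, decide whether any point--query incidence witnesses a violation'' quietly drops this index constraint, and a static data structure over all $n$ points will report false positives: a point that comes \emph{after} $p_j$ in the chain lying inside the ``before'' halfspace at $p_j$ is not a violation of anything. So the problem is not a batch halfspace-emptiness problem on a fixed set; it is a semidynamic (deletion-only, fixed order) one. This is exactly the issue the paper's algorithm is built around: it partitions the chain into $q$ contiguous blocks, builds a static emptiness structure over each block, answers a query at $p_j$ by brute force on the partially-deleted block containing $p_j$ and by fast queries on the wholly-surviving blocks, and chooses $q=n^{1/(k+1)}$ to balance the $O(n/q)$ brute-force cost against the $O(q\,(n/q)^{1-1/k}\polylog n)$ query cost. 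That balancing is what produces the exponent $2-1/(k+1)$.

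Your attribution of the exponent $1-1/(k+1)$ to a space--query tradeoff of halfspace emptiness does not check out numerically: with linear space the per-query time is already $\tilde O(n^{1-1/k})$, which is \emph{better} than $\tilde O(n^{1-1/(k+1)})$, and the space--time tradeoff for halfspace emptiness in $\RR^d$ gives different exponents again. In fact, if the prefix constraint could be ignored, a single static structure with $n$ queries would give $\tilde O(n^{2-1/k})$ total, which is faster than the theorem's bound; the $1/(k+1)$ overhead is entirely the price of the bucketing scheme that simulates deletions (and the paper's concluding remark explicitly asks whether a genuine dynamization could recover $\tilde O(n^{2-1/k})$). So to make your proof go through, you would need to add the bucketing (or some other semidynamic mechanism) and redo the balancing; the rest of your outline is sound.
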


Finally, in Section~\ref{sec:remarks} we list two open problems suggested by this work.

\paragraph{Related work.} 
Alamdari~\etal~\cite{ACG+13} showed that testing whether a given polygonal arc on $n$ points in $\RR^d$
is self-approaching can be done in $O(n)$ time for $d=2$ and in $O(n \log^2{n}/\log \log{n})$ time for $d=3$. 
One motivation for studying increasing chord curves and self-approaching curves comes from the fact that
planar curves in both classes have a small geometric dilation, \ie, at most $2.094$ and $5.3332$, respectively;
see~\cite{AKW04,IKL99}.
In contrast, Eppstein~\cite{Ep02} showed that there are points sets in $\RR^2$ and values $\beta$ for which
the $\beta$-skeleton of the set is a polygonal chain with an arbitrarily large stretch factor.

The problem of computing the stretch factors of paths, cycles, and other structures was studied by
Agarwal~\etal~\cite{AKK+08}, Chen~\etal~\cite{CDMT21},
Ebbers-Baumann~\etal~\cite{EKLL04}, Klein~\etal~\cite{KKNS09}, and Narasimhan and Smid~\cite{NS00,NS07},
among others.

\section{Preliminaries}\label{sec:prelim}

Let $\gamma$ be a piecewise smooth $s-t$ curve.
The following lemma, appearing in the work of Hagedoorn and Kostitsyna~\cite{HK22},
refers to curves in the plane, but the same proof also works in higher dimensions.

\begin{lemma} \label{lem:HK}  {\rm \cite{HK22}}
  An $s-t$ curve $\gamma$ in $\RR^d$ has increasing chords if and only if any normal to $\gamma$ at any point
  $p \in \gamma$ does not intersect the open subcurves $\gamma(s,p)$ and $\gamma(p,t)$.
\end{lemma}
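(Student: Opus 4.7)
The plan is to encode the increasing chord property as a monotonicity statement about the squared chord length and then translate that monotonicity into a geometric condition on normals. Set $f(u,v) = |\gamma(u)-\gamma(v)|^2$ on the triangle $\{0 \leq u \leq v \leq 1\}$. The inequality $|bc| \leq |ad|$ for every ordered 4-tuple $a\leq b\leq c\leq d$ on $\gamma$ is equivalent to requiring that $v \mapsto f(u,v)$ be non-decreasing and $u \mapsto f(u,v)$ be non-increasing; one direction follows from the chain $f(b,c)\leq f(a,c)\leq f(a,d)$, and the converse by specializing $a=b$ or $c=d$. Since reversing $\gamma$ swaps these two conditions, it suffices to prove that the first monotonicity (in $v$) is equivalent to the statement that no normal at any $\gamma(v)$ intersects the open subcurve $\gamma(s,\gamma(v))$.

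At a smooth point $v$, a direct differentiation gives
\[
\tfrac{d}{dv}f(u,v) = 2\bigl(\gamma(v)-\gamma(u)\bigr)\cdot\gamma'(v),
\]
which is non-negative exactly when $\gamma(u)$ lies in the closed half-space behind the normal hyperplane $H_v$ at $\gamma(v)$, i.e., on the side containing $\gamma(v-\eps)$ for small $\eps>0$. Requiring this for every $u<v$ simultaneously says that the connected subcurve $\gamma(s,\gamma(v))$ lies in that closed half-space. Because $\gamma$ leaves $H_v$ transversally at $\gamma(v)$ toward the opposite side, any putative tangential touch $\gamma(u)\in H_v$ would violate the monotonicity at a parameter slightly larger than $v$, upgrading the closed half-space inclusion to the sharp ``no intersection'' statement.

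At a non-smooth point $p=\gamma(v)$, the same calculation applied to the two one-sided derivatives of $f$ yields the condition that both extremal normals at $p$ avoid $\gamma(s,p)$; a short convex-combination argument then propagates this to every hyperplane in the double wedge between them, matching the paper's definition of a normal at $p$. Splicing the smooth and non-smooth cases along the finitely many break points gives the monotonicity of $f$ in its second argument globally, and the symmetric argument applied to $\gamma^R$ gives the monotonicity in the first argument; together they yield the full biconditional of the lemma. The main obstacle I expect is the careful treatment of non-smooth points and the sharpening from ``closed half-space'' to ``no intersection'': one must verify both that the one-sided derivative condition at a break point really propagates to every hyperplane in the double wedge, and that a hypothetical tangency of the previous subcurve with $H_v$ is ruled out by perturbing $v$, as sketched above.
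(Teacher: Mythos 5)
The paper does not actually prove Lemma~\ref{lem:HK}: it cites it verbatim from Hagedoorn and Kostitsyna~\cite{HK22} and merely remarks that the planar proof carries over to $\RR^d$. So there is no in-paper argument to compare against, and your proposal must be assessed on its own terms.

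Your reduction of the increasing chord property to the two monotonicity conditions on $f(u,v)=|\gamma(u)-\gamma(v)|^2$, the observation that reversing $\gamma$ swaps them, and the computation $\partial_v f(u,v)=2\bigl(\gamma(v)-\gamma(u)\bigr)\cdot\gamma'(v)$ at smooth points together with the convex-combination argument at break points are all sound. What they deliver, however, is exactly the \emph{closed} half-space characterization of Corollary~\ref{cor:HK}: monotonicity in $v$ is equivalent to $\gamma(s,\gamma(v))$ lying in the closed negative half-space $h_v^-$ for every normal at $\gamma(v)$, and nothing more. You flag the upgrade from ``closed half-space'' to ``no intersection with the open subcurve'' as the main obstacle, and rightly so: the proposed argument (``a tangential touch $\gamma(u)\in H_v$ would violate monotonicity at a parameter slightly larger than $v$'') does not go through. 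If $\gamma(u)\in H_v$, then $\partial_v f(u,v)=0$; a non-negative derivative vanishing at one parameter is perfectly consistent with $f(u,\cdot)$ being non-decreasing, so no contradiction arises from first-order information, and the sign of the second-order term depends on curvature and is not controlled by the increasing chord hypothesis.

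In fact the sharp ``no intersection'' form fails already for the orthogonal staircase that the paper itself exhibits as an increasing-chord curve. Take the two-segment L-chain $p_1=(0,0)$, $p_2=(0,1)$, $p_3=(1,1)$. At the corner $p_2$ the extremal normal orthogonal to the right-sided tangent $(1,0)$ is the vertical line $x=0$; under the paper's definition this hyperplane lies in the double wedge and hence \emph{is} a normal at $p_2$, and it contains the entire open subcurve $\gamma(s,p_2)=\{(0,y):0<y<1\}$. Thus one direction of Lemma~\ref{lem:HK}, read literally with extremal normals admitted, is false, even though the closed-half-space form of Corollary~\ref{cor:HK} holds (the subcurve lies on the boundary $x=0$ of $h_{p_2}^-$). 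The conclusion is that no proof of the strict statement can exist in this generality; the correct invariant is the Corollary, and a valid write-up should either prove that version, or restrict the class of normals (e.g.\ exclude extremal normals at nonsmooth points, or restrict to smooth curves) before attempting the strict version. Everything else in your proposal is fine, but this gap is not a technicality to be papered over --- it is the whole content that distinguishes the Lemma from the Corollary.
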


This property can be also formulated in terms of \emph{signed} halfspaces~\cite{BKL20,HK22}.
A \emph{positive halfspace} $h_p^+$ of $\gamma$ at point $p \in \gamma$ is a closed halfspace
bounded by a normal of $\gamma$ at $p$, which contains a neighborhood of $p$ in $\gamma(p,b)$.
Similarly, a \emph{negative halfspace} $h_p^-$ of $\gamma$ at point $p \in \gamma$ is a closed halfspace
bounded by a normal of $\gamma$ at $p$, which contains a neighborhood of $p$ in $\gamma(a,p)$.
If $p \in \gamma$ is a nonsmooth point, we say that the negative halfspace bounded by the normal perpendicular
to the left-sided tangent at $p$, and the positive halfspace bounded by the normal perpendicular
to the right-sided tangent at $p$ are \emph{extremal}. The reformulation is as follows: 

\begin{corollary} \label{cor:HK}  {\rm \cite{HK22}}
  An $s-t$ curve $\gamma$ in $\RR^d$ has increasing chords if and only if for any point $p \in \gamma$
  and positive and negative halfspaces $h_p^+$ and $h_p^-$ of $\gamma$ at point $p \in \gamma$,
  the subcurve $\gamma(s,p)$ is contained in the negative halfspace $h_p^-$ and
  the subcurve $\gamma(p,t)$ is contained in the positive halfspace $h_p^+$. 
\end{corollary}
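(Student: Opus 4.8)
The statement is a reformulation of Lemma~\ref{lem:HK}, so the plan is to translate the condition ``a normal at $p$ misses the open subcurves'' into ``each open subcurve lies in the appropriate closed halfspace bounded by that normal.'' The only non-formal ingredient is topological: the open subcurves $\gamma(s,p)$ and $\gamma(p,t)$ are continuous images of intervals, hence connected, so a normal $H$ at $p$ (a hyperplane) is disjoint from $\gamma(s,p)$ if and only if $\gamma(s,p)$ lies in the interior of one of the two closed halfspaces bounded by $H$. What then remains is to identify that halfspace with $h_p^-$ (respectively $h_p^+$ for $\gamma(p,t)$), which is exactly the role of the sign conventions in the definitions of positive and negative halfspaces.

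For the forward implication, assume $\gamma$ has increasing chords, fix $p$, and let $h_p^-$ be any negative halfspace, bounded by a normal $H$. By Lemma~\ref{lem:HK}, $H\cap\gamma(s,p)=\emptyset$, so by the remark above $\gamma(s,p)$ lies in the interior of one of the two closed halfspaces bounded by $H$. By definition $h_p^-$ contains a neighborhood of $p$ along $\gamma$ on the side of $s$, hence contains points of $\gamma(s,p)$ arbitrarily close to $p$; since a closed halfspace bounded by $H$ that contains a point of a given open halfspace must be the closure of that open halfspace, $h_p^-$ is exactly the closure of the open halfspace containing $\gamma(s,p)$, and therefore $\gamma(s,p)\subseteq h_p^-$. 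The inclusion $\gamma(p,t)\subseteq h_p^+$ follows the same way, or by applying the argument to the reversed curve $\gamma^R$, whose negative halfspaces at $p$ are precisely the positive halfspaces of $\gamma$ at $p$.

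For the converse, assume the halfspace inclusions hold at every point and for every admissible choice of $h_p^{\pm}$; by Lemma~\ref{lem:HK} it suffices to show that a fixed normal $H$ at a fixed point $p$ meets neither $\gamma(s,p)$ nor $\gamma(p,t)$. At a smooth point this is immediate: $H=\gamma'(p)^{\perp}$ is the unique normal, $h_p^-=\{x:\langle x-p,\gamma'(p)\rangle\le 0\}$ and $h_p^+=\{x:\langle x-p,\gamma'(p)\rangle\ge 0\}$ are the only negative and positive halfspaces, and the hypothesis places $\gamma(s,p)$ in $h_p^-$ and $\gamma(p,t)$ in $h_p^+$ (equivalently, differentiating $\tau\mapsto|\gamma(\tau)-c|^2$, this is the classical fact that the backward part of $\gamma$ lies in the backward tangent halfspace and the forward part in the forward one). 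At a non-smooth point one argues similarly, using the extremal normals and the extremal positive and negative halfspaces of Section~\ref{sec:prelim} to pin down the curve on each side of the corner. The delicate point is that the inclusions are with closed halfspaces, so one must still exclude the borderline case in which a subcurve meets $H$ at an interior point $q\neq p$ while lying weakly on one side of it; there $\gamma(q,p)$ forms a ``pocket'' leaving and re-entering $H$, and one rules this out by invoking the halfspace hypothesis at a suitable interior point of the pocket, whose normal separates $q$ from the later part of $\gamma$ in a way forbidden by the hypothesis. One also has to check that a corner at $p$ whose turning angle exceeds $\pi/2$ cannot create an inconsistency, and this is where the precise double-wedge description of ``normal'' is needed.

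In summary, the forward direction is a one-line consequence of Lemma~\ref{lem:HK} and connectedness, and essentially all the difficulty sits in the converse: matching every normal at a point with an admissible negative and an admissible positive halfspace (the double-wedge and extremal-normal definitions accomplish exactly this), and excluding tangential crossings of the subcurves with the normals by re-using the halfspace hypothesis at the crossing point. At smooth points the statement collapses to the familiar differential characterization, so a reader content to restrict to smooth curves may stop there.
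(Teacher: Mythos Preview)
The paper does not supply its own proof of Corollary~\ref{cor:HK}: it is quoted from~\cite{HK22} as a reformulation of Lemma~\ref{lem:HK} and left at that. So there is no in-paper argument to compare your proposal against.

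On the substance of what you wrote: the forward implication is clean and correct. Lemma~\ref{lem:HK} gives $H\cap\gamma(s,p)=\emptyset$, connectedness forces $\gamma(s,p)$ into one open halfspace, and the defining clause ``$h_p^-$ contains a neighborhood of $p$ in $\gamma(s,p)$'' picks out the correct side; that is exactly the intended reduction.

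The converse, however, is a sketch rather than a proof, and you say so yourself (``the delicate point is\ldots'', ``one also has to check\ldots''). The real issue is the one you flag: from $\gamma(s,p)\subseteq h_p^-$ with $h_p^-$ \emph{closed}, you cannot directly conclude $\gamma(s,p)\cap H=\emptyset$, which is what Lemma~\ref{lem:HK} demands. Your proposed fix---apply the halfspace hypothesis at a tangential contact point $q$---is the right instinct, but as written it is circular unless you specify which normal at $q$ you use and why that normal actually separates $q$ from the rest of the curve; at $q$ the curve is tangent to $H$, so the obvious normal at $q$ is $H$ itself, which gives you nothing new. A cleaner route is to note that if $q\in\gamma(s,p)\cap H$, then $|qp|$ equals the distance from $p$ to $H$ along $\gamma$'s chord, and compare this to chords straddling $q$ using the hypothesis at $q$ rather than at $p$; alternatively, one can avoid the issue entirely by observing that both Lemma~\ref{lem:HK} and Corollary~\ref{cor:HK} are each directly equivalent to the increasing-chord property (this is how~\cite{HK22} proceeds), so one never needs to pass from the closed-halfspace formulation back to the open-normal one.
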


\paragraph{Example.}
An \emph{ascending staircase} is a polygonal curve consisting of horizontal and vertical segments
with the $x$- and $y$- positive axis orientations, see Fig.~\ref{fig:examples}\,(left).
By Corollary~\ref{cor:HK}, such a curve has the increasing chord property; its length is
at most $\sqrt2 |st|$; this bound can be attained.  Two other examples appear in the same figure.

\begin{figure}[ht]
\begin{center}
\scalebox{0.7}{\includegraphics{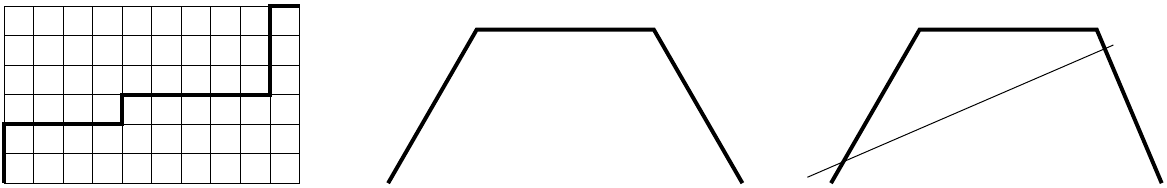}}
\caption{Left and center: two curves with increasing chords (an orthogonal staircase and 
 the upper hull of a regular hexagon).
  Right: a curve that does not satisfy the increasing chord property and a witness line.}
\label{fig:examples}
\end{center}
\end{figure}

\medskip
The above lemma and corollary allow one to obtain a subquadratic time algorithm for testing the
increasing chord property.

\section{Higher dimensional Euclidean upper bound} \label{sec:d-space}

Recall that a curve $\gamma:[0,1] \to \RR^d$ is monotone in direction $q$ if
$\tau \mapsto \langle \gamma(\tau), q \rangle $ is a monotone function;
here, by a \emph{direction} we mean a nonzero vector in $\RR^d$,
and $\langle x,y \rangle$ denotes the \emph{inner product} of $x$ and $y$. The next lemma is a
straightforward generalization of \cite[Lemma 1]{Rot94}. 

\begin{lemma}\label{lem:monotonicity}
  Let $\gamma:[0,1] \to \RR^d$ satisfy the increasing chord property. Then $\gamma$ is monotone in the direction
  $\gamma(1)-\gamma(0)$.
\end{lemma}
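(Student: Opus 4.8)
The plan is to reduce the statement to showing that the real-valued function $f(\tau) := \langle \gamma(\tau), q\rangle$ is non-decreasing on $[0,1]$, where $q := \gamma(1)-\gamma(0)$. If $q=0$ then $f\equiv 0$ and there is nothing to prove, so assume $q\neq 0$; since $f(1)-f(0)=\langle q,q\rangle = |q|^2>0$, ``$f$ non-decreasing'' is exactly ``$\gamma$ monotone in the direction $\gamma(1)-\gamma(0)$''. After reparametrizing $\gamma$ by arc length (which does not affect monotonicity in any direction), we may assume $|\gamma'|\equiv 1$ on the interior of each of the finitely many smooth pieces; in particular, at every such interior point $\tau_0$ the unit tangent $v=\gamma'(\tau_0)$ and the unique normal hyperplane to $\gamma$ at $p:=\gamma(\tau_0)$ (through $p$, orthogonal to $v$) are well defined.

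The heart of the argument is a pointwise lower bound on $f'$ coming from Corollary~\ref{cor:HK}. Fix such a $\tau_0$ and write $p=\gamma(\tau_0)$, $v=\gamma'(\tau_0)$. Comparing the one-sided expansions $\gamma(\tau_0\pm\eps)=p\pm\eps v+o(\eps)$ with the definitions identifies the positive and negative halfspaces of $\gamma$ at $p$ as $h_p^+=\{x:\langle x-p,v\rangle\ge 0\}$ and $h_p^-=\{x:\langle x-p,v\rangle\le 0\}$. By Corollary~\ref{cor:HK}, the subcurve $\gamma(s,p)$ lies in the closed set $h_p^-$ and $\gamma(p,t)$ lies in the closed set $h_p^+$; passing to closures gives $s\in h_p^-$ and $t\in h_p^+$, \ie\ $\langle s-p,v\rangle\le 0\le\langle t-p,v\rangle$. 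Adding these two inequalities yields $\langle q,v\rangle=\langle t-s,v\rangle\ge 0$, that is, $f'(\tau_0)=\langle\gamma'(\tau_0),q\rangle\ge 0$.

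It remains to globalize. The function $f$ is continuous on $[0,1]$ and, by the previous step, has nonnegative derivative at every interior point of each of the finitely many smooth pieces of $\gamma$; hence, by the mean value theorem, $f$ is non-decreasing on each of those closed subintervals, and therefore on all of $[0,1]$. This gives the claimed monotonicity in direction $\gamma(1)-\gamma(0)$.

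I do not expect a genuine obstacle here: the argument is essentially Rote's planar Lemma~1~\cite{Rot94} transcribed to $\RR^d$, with the dimension playing no role once Corollary~\ref{cor:HK} (valid verbatim in $\RR^d$) is available. The only points that need care are the orientation bookkeeping --- verifying that it is $t$, not $s$, that ends up on the $\langle x-p,v\rangle\ge 0$ side of the normal --- and the remark that the corner points of $\gamma$ can simply be ignored, since continuity of $f$ together with monotonicity on the closed pieces of a finite subdivision already forces global monotonicity. A more hands-on alternative, avoiding the corollary, would start from a hypothetical failure of monotonicity, producing $\tau_1<\tau_2$ with $\langle\gamma(\tau_2)-\gamma(\tau_1),q\rangle<0$, and attempt to violate the chord inequality $|bc|\le|ad|$ with four points in order anchored at $s$ and $t$; but the halfspace route above is shorter and is the one prepared for in the Preliminaries.
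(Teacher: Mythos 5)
Your proof is correct, but it takes a noticeably different and heavier route than the paper's. You pass through Corollary~\ref{cor:HK}, arc-length reparametrization, and a pointwise derivative bound, then globalize via the mean value theorem. The paper instead argues directly from the chord inequalities: for $\sigma\le\tau$, taking $(a,b,c,d)=(\gamma(0),\gamma(0),\gamma(\sigma),\gamma(\tau))$ gives $|\gamma(0)\gamma(\sigma)|\le|\gamma(0)\gamma(\tau)|$, and taking $(a,b,c,d)=(\gamma(\sigma),\gamma(\tau),\gamma(1),\gamma(1))$ gives $|\gamma(\tau)\gamma(1)|\le|\gamma(\sigma)\gamma(1)|$, so $\gamma([0,\tau])$ lies in the closed ball $\overline{B}(\gamma(0),|\gamma(0)\gamma(\tau)|)$ and outside the open ball $B(\gamma(1),|\gamma(\tau)\gamma(1)|)$; subtracting the two squared-distance inequalities directly yields $\langle\gamma(\sigma),q\rangle\le\langle\gamma(\tau),q\rangle$ with $q=\gamma(1)-\gamma(0)$. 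That lune argument is shorter, uses nothing from the Preliminaries, and — importantly — does not assume piecewise smoothness, which is exactly what lets the authors remark after Theorem~\ref{thm:d-space} that their bounds hold for arbitrary (not necessarily piecewise smooth) curves with increasing chords; your proof, by contrast, genuinely needs smoothness to speak of $\gamma'$ and of a unique normal. (Two small nits in your write-up: you should subtract, not add, $\langle s-p,v\rangle\le 0$ from $\langle t-p,v\rangle\ge 0$ to get $\langle q,v\rangle\ge 0$; and your closing ``hands-on alternative'' sketch is in fact the direction the paper takes and is the one that avoids the smoothness hypothesis.)
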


\begin{proof}
  This property readily follows from the observation that for any $0 \leq \tau \leq 1$, the points of $\gamma([0,\tau])$
  belong to the closed ball of radius $|\gamma(0)\gamma(\tau)|$ centered at $\gamma(0)$, and do not belong to the open ball
  of radius $|\gamma(\tau)\gamma(1)|$ centered at $\gamma(1)$. 
\end{proof}

In the following Lemmas~\ref{lem:Rote1}-\ref{lem:induction} we assume that $d \geq 2$.
Lemma~\ref{lem:Rote1} below was stated and proved by Rote \cite[Lemma 4]{Rot94}.

\begin{lemma}\label{lem:Rote1}
  Suppose that a curve $\gamma:[0,1] \to \RR^d$ is monotone in the $d$ linearly independent directions
  $q_1, \ldots, q_d$. Let $s=\gamma(0)$ and $t=\gamma(1)$. Then the curve is contained in the parallelotope 
\[ P= \{ x \in \RR^d : \langle s, q_i \rangle \leq \langle x, q_i \rangle \leq \langle t, q_i \rangle \hbox{ for }
i=1,2,\ldots, d \}.   \]
Furthermore, its length is bounded by the sum of the lengths of $d$ successive edges of $P$ leading
from $s$ to $t$.
\end{lemma}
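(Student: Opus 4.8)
The plan is to exploit the linear-algebraic structure provided by $q_1,\dots,q_d$ together with its dual basis. First I would note that, after replacing some $q_i$ by $-q_i$ if necessary, we may assume $\gamma$ is monotone \emph{non-decreasing} in each $q_i$, so that $\langle s,q_i\rangle\le\langle t,q_i\rangle$ and $P$ is the set written in the statement. Setting $f_i(\tau)=\langle \gamma(\tau),q_i\rangle$, monotonicity gives $\langle s,q_i\rangle=f_i(0)\le f_i(\tau)\le f_i(1)=\langle t,q_i\rangle$ for all $\tau\in[0,1]$ and all $i$, hence $\gamma(\tau)\in P$; this proves the containment. That $P$ is a bounded parallelotope follows because linear independence of the $q_i$ makes $L\colon x\mapsto(\langle x,q_1\rangle,\dots,\langle x,q_d\rangle)$ a linear isomorphism of $\RR^d$, and $P=L^{-1}\bigl(\textstyle\prod_i[\langle s,q_i\rangle,\langle t,q_i\rangle]\bigr)$ is the preimage of an axis-parallel box, whose edges pull back to the edges of $P$.

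For the length bound, let $e_1,\dots,e_d$ be the basis dual to $q_1,\dots,q_d$, i.e.\ $\langle e_i,q_j\rangle=\delta_{ij}$. The key step is the expansion
\[
\gamma(\tau)=s+\sum_{j=1}^d\bigl(f_j(\tau)-f_j(0)\bigr)\,e_j ,
\]
which one verifies by pairing both sides with each $q_i$. Thus $\gamma$ is, up to the affine change of coordinates $L$, a curve whose $j$-th coordinate function $f_j$ is monotone and whose coordinate directions are exactly the edge directions $e_j$ of $P$. Using piecewise smoothness and the triangle inequality in $\RR^d$,
\[
|\gamma|=\int_0^1\Bigl|\sum_{j=1}^d f_j'(\tau)\,e_j\Bigr|\,d\tau
\le\sum_{j=1}^d|e_j|\int_0^1|f_j'(\tau)|\,d\tau
=\sum_{j=1}^d|e_j|\cdot\bigl(f_j(1)-f_j(0)\bigr),
\]
the last equality because $f_j$ is monotone, so its total variation is $|\langle t-s,q_j\rangle|$. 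Since the edge of $P$ in direction $e_j$ has length $|\langle t-s,q_j\rangle|\cdot|e_j|$, the right-hand side equals the total length of any path of $d$ successive edges of $P$ from $s$ to $t$: such a path is a geodesic in the combinatorial cube, so it uses each of the $d$ edge directions exactly once and its length is independent of the order. For a merely rectifiable curve the same estimate is obtained from the total-variation definition of arc length, bounding $\sum_k|\gamma(\tau_k)-\gamma(\tau_{k-1})|$ over a partition by $\sum_j|e_j|\sum_k|f_j(\tau_k)-f_j(\tau_{k-1})|=\sum_j|e_j|\,|\langle t-s,q_j\rangle|$ and taking the supremum.

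I expect the only genuine obstacle is identifying the right coordinate system: once one works in the dual basis $\{e_j\}$, the edges of $P$ become the coordinate axes and the ``staircase'' intuition (length at most the sum of the coordinate ranges) collapses to a one-line triangle-inequality estimate. The one point that needs care is that $L$ does not preserve Euclidean length, so the argument must be carried out with the honest vectors $e_j$ and their norms $|e_j|$ kept as weights throughout, rather than passing to the transformed picture.
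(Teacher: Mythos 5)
Your proof is correct. The paper does not reproduce a proof for this lemma (it is cited verbatim from Rote, Lemma~4 of \cite{Rot94}), but your argument is precisely the standard one: containment follows from monotonicity of the coordinate functions $f_i=\langle\gamma(\cdot),q_i\rangle$, and the length bound follows from expanding the velocity in the dual basis $\{e_j\}$ and applying the triangle inequality, after which the total variation of each $f_j$ equals the corresponding edge length $|e_j|\,|\langle t-s,q_j\rangle|$ of $P$. Your observation that the weights $|e_j|$ must be carried explicitly (since the linear map $L$ distorts Euclidean length) is exactly the point that makes the estimate honest, and the extension to rectifiable curves via the total-variation definition of arc length is the right way to avoid relying on piecewise smoothness.
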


We continue with a generalization of \cite[Lemma 5]{Rot94} for dimension $d$. The obtained bounds are relevant
in bounding from above the lengths of $d$ successive edges of $P$. We note that our argument essentially differs
from that of~\cite[Lemma 5]{Rot94}.

\begin{lemma}\label{lem:new}
  Let $q_1, \ldots, q_d \in \RR^d$ be unit vectors such that the angle between $q_2$ and $q_1$ is
  $0 < \alpha < \frac{\pi}{2}$, the angle between $q_3$ and the plane of $q_1, q_2$ is $\alpha$, and so on,
  \ie, the angle between $q_d$ and the hyperplane of the other vectors is $\alpha$.
  Let $Q$ be the $d \times d$ matrix whose $i$th row is $q_i$ for
$i=1,2,\ldots, d$. Let $r_i$ denote the $i$th column of $Q^{-1}$. Then 
\[
| r_1 | = \frac{1}{\sin^{d-1} \alpha}, \hbox{ and}
\]
\[
|r_i| \leq \frac{1}{\sin^{d-i+1} \alpha}  \hbox{ for all } 1 < i \leq d.
\]
\end{lemma}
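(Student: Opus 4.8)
The plan is to analyze the matrix $Q$ row-by-row, using a Gram–Schmidt-type orthogonalization adapted to the hypothesis. Let $V_k = \mathrm{span}(q_1,\dots,q_k)$. The hypothesis says that $q_{k+1}$ makes angle $\alpha$ with the hyperplane $V_k$ (inside $V_{k+1}$), so if I write $q_{k+1} = u_{k+1} + n_{k+1}$ with $u_{k+1}\in V_k$ and $n_{k+1}\perp V_k$, then $|n_{k+1}| = \sin\alpha$ and $|u_{k+1}| = \cos\alpha$. The key object to control is the dual basis: by definition, $r_i$ is the unique vector with $\langle q_j, r_i\rangle = \delta_{ij}$. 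In particular $r_1$ must be orthogonal to $q_2,\dots,q_d$, i.e.\ orthogonal to the hyperplane $H_1 = \mathrm{span}(q_2,\dots,q_d)$, and normalized so that $\langle q_1, r_1\rangle = 1$. Writing $q_1 = u_1 + n_1'$ where $n_1'$ is the component of $q_1$ orthogonal to $H_1$, we get $r_1 = n_1'/|n_1'|^2$, so $|r_1| = 1/|n_1'| = 1/\sin(\text{angle of }q_1\text{ to }H_1)$. The main computation for part (i) is therefore to show that the angle between $q_1$ and $\mathrm{span}(q_2,\dots,q_d)$ has sine exactly $\sin^{d-1}\alpha$; this is the symmetric statement to the nested construction, and I would prove it by an explicit induction on $d$, peeling off coordinates in an orthonormal basis adapted to the flag $V_1\subset V_2\subset\cdots\subset V_d$.

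For the bounds on $|r_i|$ with $i>1$, I would set up coordinates so that $V_k = \mathrm{span}(e_1,\dots,e_k)$ for each $k$; then $Q$ is lower-triangular-like: row $q_k$ has zero entries in positions $k+1,\dots,d$, diagonal entry $\sin\alpha$ for $k\ge 2$ (the $e_k$-component of $q_k$, since $n_k$ points along $e_k$ up to sign) and $\pm 1$ for $k=1$. Actually, after choosing the orthonormal basis so that $e_k$ is the unit normal of $V_{k-1}$ inside $V_k$, the matrix $Q$ is lower triangular with diagonal $(1,\sin\alpha,\sin\alpha,\dots,\sin\alpha)$ and all off-diagonal entries bounded by $1$ in absolute value. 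The inverse of such a triangular matrix can be estimated directly: $r_i$ is the $i$th column of $Q^{-1}$, and back-substitution in a lower-triangular system with unit-bounded entries and diagonal entries $\ge \sin\alpha$ yields $|(Q^{-1})_{ji}| \le (\sin\alpha)^{-(j-i+1)}$ for $j\ge i$ (and $0$ for $j<i$), so that $|r_i|^2 = \sum_{j\ge i} |(Q^{-1})_{ji}|^2 \le \sum_{j=i}^{d} (\sin\alpha)^{-2(j-i+1)}$. A slightly careful version of this recursion — tracking that the dominant term is the last one, $j=d$, giving exponent $d-i+1$, and that the geometric-series tail only contributes a constant factor that one can absorb — gives the stated $|r_i| \le (\sin\alpha)^{-(d-i+1)}$; for $i=1$ the triangular structure together with the orthogonality of $r_1$ to $H_1$ forces equality, $|r_1| = (\sin\alpha)^{-(d-1)}$.

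The steps, in order: (1) fix the adapted orthonormal basis $e_1,\dots,e_d$ from the flag $V_1\subset\cdots\subset V_d$ and write each $q_k$ in these coordinates, recording that $Q$ becomes lower triangular with diagonal $(1,\sin\alpha,\dots,\sin\alpha)$ and unit-bounded entries; (2) compute $|r_1|$ exactly, either via the triangular inverse or via the clean geometric description $r_1 = n_1'/|n_1'|^2$ together with the identity $|n_1'| = \sin^{d-1}\alpha$ for the angle of $q_1$ to $\mathrm{span}(q_2,\dots,q_d)$ (proved by induction on $d$); (3) estimate the remaining columns by back-substitution, proving the bound $|(Q^{-1})_{ji}|\le(\sin\alpha)^{-(j-i+1)}$ by downward induction on $j$ from $d$ to $i$, and then summing the squares.

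The step I expect to be the main obstacle is (2): verifying that the angle between $q_1$ and the hyperplane spanned by $q_2,\dots,q_d$ has sine exactly $\sin^{d-1}\alpha$. The hypothesis is phrased asymmetrically (each new vector makes angle $\alpha$ with the span of the \emph{earlier} ones), so extracting a statement about $q_1$ versus the \emph{later} ones requires either a determinant computation ($|r_1| = |\det(q_2,\dots,q_d)\text{-minor}| / |\det Q|$, with $|\det Q| = \sin^{d-1}\alpha$ immediate from the triangular form, and the numerator requiring its own induction) or a direct inductive argument about the Gram matrix. I would handle it via the determinant route: $|r_1|$ is the norm of the first column of $Q^{-1}$, which equals $\|(\text{cofactors})\|/|\det Q|$, and one checks that the relevant cofactor vector has norm exactly $\sin^{d-2}\alpha \cdot (\text{something})$ — but the cleanest line is simply to observe that $r_1 \perp \mathrm{span}(q_2,\dots,q_d)$ and $\langle q_1,r_1\rangle=1$ force $|r_1|^{-1}$ to equal the distance from $q_1$ to that hyperplane, and then compute that distance recursively using the nested-angle hypothesis.
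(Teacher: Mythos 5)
Your setup matches the paper's (lower-triangular $Q$ with diagonal $(1,\sin\alpha,\dots,\sin\alpha)$), and you correctly identify that the dual basis / cofactor formula is the right tool. But the estimation scheme you propose does not deliver the stated inequalities, and one of the intermediate claims is false.

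First, the back-substitution bound $|(Q^{-1})_{ji}|\le (\sin\alpha)^{-(j-i+1)}$ fails in general. Each off-diagonal entry of $Q$ is bounded only by $\cos\alpha$ (the whole row is a unit vector with one entry equal to $\sin\alpha$), so the recursion $a_j \le \frac{\cos\alpha}{\sin\alpha}\sum_{k=i}^{j-1} a_k$ with $a_i=1/\sin\alpha$ has ratio $a_{j+1}/a_j \to 1+\frac{\cos\alpha}{\sin\alpha}=\frac{\sin\alpha+\cos\alpha}{\sin\alpha}>\frac{1}{\sin\alpha}$. Already at $j=i+2$ the extremal value is $\cos\alpha(\sin\alpha+\cos\alpha)/\sin^3\alpha$, and $\cos\alpha(\sin\alpha+\cos\alpha)>1$ for, e.g., $\alpha$ near $\pi/8$, so your claimed entry bound is violated. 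Second, even if the entry bound held, summing $\sum_{j\ge i}(\sin\alpha)^{-2(j-i+1)}$ yields an extra factor $\approx 1/\cos^2\alpha$ over the target $(\sin\alpha)^{-2(d-i+1)}$; this is not ``a constant factor that one can absorb,'' because the lemma's bound has no such slack and is used as stated downstream. Third, the step you flag as the main obstacle is genuinely impassable as you describe it: the angle between $q_1$ and $\mathrm{span}(q_2,\dots,q_d)$ is \emph{not} determined by the hypothesis (it depends on the free angles hidden in the off-diagonal entries), so $\sin(\text{that angle})=\sin^{d-1}\alpha$ is false. (For instance, for $d=3$ with $q_3=(\cos\beta\cos\alpha,\sin\beta\cos\alpha,\sin\alpha)$ one gets $|r_1|^2=1/\sin^2\alpha+\cos^2\alpha\sin^2(\alpha-\beta)/\sin^4\alpha$, which varies with $\beta$.) The paper's lemma in fact only proves $|r_1|\le 1/\sin^{d-1}\alpha$, and only the inequality is used later.

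The missing idea is the following refinement, which is what makes the paper's computation close exactly. Instead of bounding entries of $Q^{-1}$ one at a time, bound the whole cofactor $|\det Q_{uv}|$ by the product of the row norms (Hadamard's inequality). After deleting row $u$ and column $v$ from $Q$, the minor is block lower triangular; two of the blocks are lower triangular with diagonal $\sin\alpha$ and contribute $\sin^{u-2}\alpha$ and $\sin^{d-v}\alpha$, while the middle block $C$ has rows of length $\le 1$ \emph{except the last, whose length is at most $\cos\alpha$} (because that row is the truncation of the unit vector $q_v$ after removing its $\sin\alpha$ diagonal entry). This single $\cos\alpha$ factor on each off-diagonal cofactor is exactly what makes the geometric series $\sin^{2(d-2)}\alpha+\cos^2\alpha\bigl(\sin^{2(d-3)}\alpha+\cdots+\sin^{2(i-2)}\alpha\bigr)$ telescope to $\sin^{2(i-2)}\alpha$, giving $|r_i|\le 1/\sin^{d-i+1}\alpha$ with no extraneous constant. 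Without tracking that $\cos\alpha$ factor, the argument cannot close.
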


\begin{proof}
Without loss of generality, we may assume that $q_1$ coincides with the first basis vector, $q_2$ lies in the
$(x_1,x_2)$-plane, and so on, i.e. $q_i$ lies in the linear subspace spanned by the first $i$ basis vectors for
$i=1,2,\ldots, d$. Then we can write $Q$ as 
\[
Q = \left[ \begin{array}{rrrr}
1 & 0 & \ldots & 0\\
\cos \alpha & \sin \alpha & \ldots & 0\\
\cos \zeta_{31} \cos \alpha & \sin \zeta_{31} \cos \alpha & \ldots & 0\\
\vdots & \vdots & & \vdots \\
\cos \zeta_{d(d-2)} \ldots \cos \zeta_{d1} \cos \alpha & \sin \zeta_{d(d-2)} \cos \zeta_{d(d-3)}  
\ldots \cos \zeta_{d1} \cos \alpha & \cdots & \sin \alpha
\end{array}  
\right]
\]
for some suitable angles $\zeta_{uv}$.

Let $Q_{uv}$ denote the matrix obtained from $Q$ by removing its $u$th row and $v$th column.
If $r_{uv}$ denotes the entry of $Q^{-1}$ in the $u$th row and $v$th column,
then $r_{uv} = \frac{ (-1)^{u+v} \det (Q_{vu})}{\det (Q)}$. We estimate $| \det(Q_{uv})|$. 

First, recall that the inverse of a lower triangular matrix is lower triangular, and thus,
if $v < u$, then $\det(Q_{uv}) = 0$.
Recall also that the determinant of a lower triangular matrix is the product of its diagonal elements.
Thus, we have $\det(Q_{uu}) = \sin^{d-2} \alpha$ for all $u \geq 2$, and $\det(Q_{11}) = \sin^{d-1} \alpha$.
From now on we assume that $u < v$.

Next, we consider the case that $1 < u < v-1 < d-1$. Then $Q_{uv}$ is a block lower triangular matrix
consisting of three square blocks $B,C,D$ of sizes $u-1$, $v-u$ and $d-v$, respectively. Specifically,
if $O_{k \times l}$ denotes the zero matrix of size $k \times l$, then $Q_{uv}$ can be written as 
\[
Q_{uv} = \left[ \begin{array}{rrr}
B & O_{(u-1) \times (v-u)} & O_{(u-1) \times (d-v)}  \\
E & C & O_{(v-u) \times (d-v)}\\
F & G & D
\end{array}  
\right]
\]
for some rectangular matrices $E, F, G$.

Thus, we have $\det Q_{uv} = (-1)^{u+v} \det(B) \det(C) \det(D)$. Here, $B$ and $D$ are lower triangular matrices, and
their determinants are $\sin^{u-2} \alpha$ and $\sin^{d-v} \alpha$, respectively. The second block $C$ is not lower
triangular, since the elements right above the main diagonal are equal to $\sin (\alpha)$. On the other hand, to
estimate $\det(C)$ we may use the geometric interpretation of the determinant of a matrix: it is equal to the signed
volume of the parallelotope induced by its row vectors. Thus, $|\det (C)|$ is less than or equal to the product of the
lengths of its row vectors. 

The first row of $C$ is $[ \begin{array}{ccccc} \sin \zeta_{(u+1)1} \cos \alpha & \sin \alpha & 0 & \ldots &
    0 \end{array} ]$, and its length is at most $1$, with equality if $|\sin \zeta_{(u+1)1}| = 1$.
Similarly, the second row of $C$ is 
\[
  [ \begin{array}{cccccc} \sin \zeta_{(u+2)2} \cos \zeta_{(u+2)1} \cos \alpha &
      \sin \zeta_{(u+2)1} \cos \alpha & \sin \alpha & 0 & \ldots & 0 \end{array} ], 
\]
and its length is at most $1$. Using the same consideration, we obtain that the length of every row of $C$,
but the last one, is at most one, and the length of the last row of $C$ is at most $\cos \alpha$.
This implies that if $1 < u < v-1 < d-1$, then $| \det(Q_{uv})| \leq \cos \alpha \sin^{d+u-v-2} \alpha$. 

We are left with the cases that $u > 1$, and $u=v-1$ or $v=d$, or that $u=1$ and $v > 1$. A slight modification
of the previous argument shows that in the first case $|\det(Q_{uv})| \leq \cos \alpha \sin^{d+u-v-2} \alpha$,
and in the second case $|\det(Q_{1v})| \leq \cos \alpha \sin^{d-v} \alpha$. 
Summing up,
\begin{itemize}
\item[(i)] If $u > v$, then $\det(Q_{uv})=0$;
\item[(ii)] $\det(Q_{11}) = \sin^{d-1} \alpha$, and if $u > 1$, $\det(Q_{uu}) = \sin^{d-2} \alpha$;
\item[(iii)] If $1 = u < v$, then $|\det(Q_{uv})|=|\det(Q_{1v})| \leq \cos \alpha \sin^{d-v} \alpha$;
\item[(iv)] If $1 < u < v$, then $|\det(Q_{uv})| \leq \cos \alpha \sin^{d+u-v-2} \alpha$.
\end{itemize}

Now we estimate the lengths of the $r_i$. Recall that $r_i$ is the $i$th column of $Q^{-1}$. Thus, by the Pythagorean
Theorem, the above estimates and the summation formula for the elements of a geometric sequence, we obtain that if $1 <
i < d$, then
\begin{align*}
|r_i| &= \sqrt{\sum_{j=1}^d r_{ji}^2} = \frac{1}{\sin^{d-1} \alpha} \sqrt{\sum_{j=1}^d \left(\det(Q_{ij})\right)^2} \\
&\leq \frac{1}{\sin^{d-1} \alpha} \sqrt{ \sin^{2d-4} \alpha + \cos^2 \alpha \left( \sin^{2d-6} \alpha + \ldots
  \sin^{2i-4} \alpha \right) } = \frac{1}{\sin^{d-i+1}\alpha}. 
\end{align*}
Similarly, we obtain that $|r_d| \leq \frac{1}{\sin \alpha}$ and $|r_1| \leq \frac{1}{\sin^{d-1} \alpha}$.
\end{proof}

\begin{lemma}\label{lem:Rote2}
Let $\gamma:[0,1] \to \RR^d$ be a curve that is monotone in the $d$ linearly independent directions $q_1, \ldots, q_d$. Let $s=\gamma(0)$
and $t=\gamma(1)$. Assume that the angle between $q_2$ and $q_1$ is $0 < \alpha < \frac{\pi}{2}$, the angle between $q_3$ and
the plane of $q_1, q_2$ is $\alpha$, and so on i.e. the angle between $q_d$ and the hyperplane of the other vectors is
$\alpha$. Assume also that $t-s = q_d$. Then the length of $\gamma$ is at most
\[
|\gamma| \leq |st| \left( \frac{1}{\sin \alpha} + \frac{\cos \alpha}{\sin^{d-1} \alpha} +  \frac{\cos \alpha \left( 1- \sin^{d-2}
  \alpha \right)}{\sin^{d-1} \alpha \left( 1 - \sin \alpha \right)} \right). 
\]
\end{lemma}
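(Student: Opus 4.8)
The plan is to combine Lemma~\ref{lem:Rote1} with the estimates on the columns $r_i$ of $Q^{-1}$ from Lemma~\ref{lem:new}. By Lemma~\ref{lem:Rote1}, the curve $\gamma$ is contained in the parallelotope $P$ determined by the slabs $\langle s,q_i\rangle \le \langle x,q_i\rangle \le \langle t,q_i\rangle$, and $|\gamma|$ is bounded by the sum of the lengths of $d$ successive edges of $P$ leading from $s$ to $t$. So the first step is to write down these edge lengths explicitly. An edge of $P$ in the $i$th direction has as its direction vector exactly $r_i$ (up to scaling): the edges of $P$ are parallel to the columns of $Q^{-1}$, since $Q r_i = e_i$ means $\langle q_j, r_i\rangle = \delta_{ij}$, so moving along $r_i$ changes only the $i$th linear functional $\langle \cdot, q_i\rangle$. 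Hence the edge of $P$ in direction $i$ has length $\lambda_i |r_i|$, where $\lambda_i = \langle t-s, q_i\rangle$ is the width of $P$ in that slab direction.

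The second step is to bound the widths $\lambda_i$. We are given $t-s = q_d$, a unit vector, so $\lambda_i = \langle q_d, q_i\rangle \le |q_d||q_i| = 1$ for every $i$ by Cauchy–Schwarz, and $\lambda_d = \langle q_d, q_d\rangle = 1$. Thus the length of the $i$th edge of $P$ is at most $|r_i|$ for $i < d$ and exactly $|r_d| \cdot 1$ — but for the $d$th edge I should be careful: the path must traverse exactly one edge in each of the $d$ directions, and the total displacement must be $t-s$, so the $d$th edge contributes its full length $\lambda_d |r_d|$; however I can also just use the bound $|r_d| \le \frac{1}{\sin\alpha}$ from Lemma~\ref{lem:new}. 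Combining,
\[
|\gamma| \le \sum_{i=1}^d \lambda_i |r_i| \le |r_1| + |r_d| + \sum_{i=2}^{d-1} |r_i|
\le \frac{1}{\sin^{d-1}\alpha} + \frac{1}{\sin\alpha} + \sum_{i=2}^{d-1} \frac{1}{\sin^{d-i+1}\alpha},
\]
where I used $|r_1| = \frac{1}{\sin^{d-1}\alpha}$, $|r_d|\le\frac{1}{\sin\alpha}$, and $|r_i| \le \frac{1}{\sin^{d-i+1}\alpha}$ for $1 < i < d$.

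The third step is to recognize the remaining sum $\sum_{i=2}^{d-1} \sin^{-(d-i+1)}\alpha$ as a geometric series. Setting $j = d-i$, as $i$ runs from $2$ to $d-1$ the exponent $d-i+1 = j+1$ runs over $2,3,\dots,d-1$, so the sum equals $\sin^{-2}\alpha + \sin^{-3}\alpha + \cdots + \sin^{-(d-1)}\alpha$. Wait — I need to double-check this matches the claimed $\frac{\cos\alpha(1-\sin^{d-2}\alpha)}{\sin^{d-1}\alpha(1-\sin\alpha)}$; factoring, the intended grouping is presumably $|r_i| \le \cos\alpha \cdot (\text{something})$ more refined than what I wrote, coming from the $\cos\alpha$ factors in estimates (iii)–(iv) of Lemma~\ref{lem:new} rather than the cruder bound $\frac{1}{\sin^{d-i+1}\alpha}$. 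The cleaner route is to go back into the proof of Lemma~\ref{lem:new}: there $|r_i|^2 = \frac{1}{\sin^{2d-2}\alpha}\sum_j \det(Q_{ij})^2$, and for $1 < i < d$ only the terms $j=i$ (giving $\sin^{2d-4}\alpha$) and $j > i$ (giving at most $\cos^2\alpha\,\sin^{2d-2j-2}\alpha$ roughly) survive, so one actually has a term-by-term bound where the $\cos\alpha$ is attached to all but the diagonal contribution. The main obstacle is bookkeeping: pinning down exactly which estimate from Lemma~\ref{lem:new} to use for each edge so that, after summing the geometric series $\cos\alpha\sum_{m=0}^{d-3}\sin^{-(m+?)}\alpha$ with ratio $1/\sin\alpha$, the telescoped closed form is precisely $\frac{1}{\sin\alpha} + \frac{\cos\alpha}{\sin^{d-1}\alpha} + \frac{\cos\alpha(1-\sin^{d-2}\alpha)}{\sin^{d-1}\alpha(1-\sin\alpha)}$. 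Concretely, I expect: the $d$th edge contributes $\frac{1}{\sin\alpha}$; the first edge (direction $q_1$) contributes $\frac{\cos\alpha}{\sin^{d-1}\alpha}$ because its width $\lambda_1 = \langle q_d,q_1\rangle$ picks up a $\cos\alpha$ factor (the component of $q_d$ along $q_1$ is $\cos\zeta\cdots\cos\alpha$, bounded by $\cos\alpha$, while $|r_1| = \sin^{-(d-1)}\alpha$); and the middle edges contribute $\sum_{i=2}^{d-1}\lambda_i|r_i| \le \cos\alpha\sum_{i=2}^{d-1}\sin^{-(d-i+1)}\alpha = \cos\alpha\cdot\frac{\sin^{-(d-1)}\alpha - \sin^{-1}\alpha}{1-\sin^{-1}\alpha} = \frac{\cos\alpha(1-\sin^{d-2}\alpha)}{\sin^{d-1}\alpha(1-\sin\alpha)}$, again using $\lambda_i \le \cos\alpha$ for $i < d$ since $\langle q_d, q_i\rangle$ always contains at least one $\cos\alpha$ factor when $i \ne d$. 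Summing the three contributions gives the claimed bound; multiplying through by $|st|$ (here $=1$, but stated generally) finishes the proof.
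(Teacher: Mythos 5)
Your approach is essentially identical to the paper's: bound $|\gamma|$ by $\sum_i |E_i| = \sum_i |\langle q_d,q_i\rangle|\cdot|r_i|$, use $\langle q_d,q_d\rangle=1$ and $|\langle q_d,q_i\rangle|\le\cos\alpha$ for $i<d$, plug in the estimates on $|r_i|$ from Lemma~\ref{lem:new}, and close the geometric series. Two small remarks: the paper's cleaner justification for $|\langle q_d,q_i\rangle|\le\cos\alpha$ when $i<d$ is that $q_i$ lies in the hyperplane from which $q_d$ makes angle $\alpha$, so the angle between $q_d$ and $q_i$ is at least $\alpha$, which avoids tracking which $\cos$ factors appear in the coordinate expansion; and your intermediate geometric-series expression $\frac{\sin^{-(d-1)}\alpha - \sin^{-1}\alpha}{1-\sin^{-1}\alpha}$ is negative for $0<\alpha<\pi/2$ and should read $\frac{\sin^{-2}\alpha-\sin^{-d}\alpha}{1-\sin^{-1}\alpha}$, though the closed form you end with is correct.
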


\begin{proof}
Without loss of generality, we may assume that all $q_i$ are unit vectors.
We use the result of Lemma~\ref{lem:Rote1} and the notation of Lemma~\ref{lem:new}. By the definition of the inverse of
a matrix, for all $i,j$ we have $\langle q_i, r_j \rangle = \delta_{ij}$, where $\delta_{ij}$ denotes the Kronecker
delta. Observe that for every value of $i$, there is an edge class of $P$ perpendicular to all $q_j$ but $q_i$. Let
$E_i$ denote an edge from this class starting at $s$. Then $E_i$ is parallel to $r_i$. 

We may assume for simplicity that $s$ is the origin. Observe that the orthogonal projection of $E_i$ onto the line of $q_i$
coincides with that of $st$.
Thus, if $E_i$ is the segment with endpoints $s=o$ and $\lambda r_i$, then
$\langle q_d, q_i \rangle = \lambda \langle r_i, q_i \rangle = \lambda$,
implying that $|E_i| = |\langle q_d, q_i \rangle| \cdot |r_i|$.
On the other hand, our conditions for the $q_i$ yield that the angle between $q_i$ and
$q_j$ is at least $\alpha$ and at most $\pi - \alpha$ for any $i \neq j$, implying that
$|\langle q_d, q_i \rangle | \leq \cos \alpha$ if $i \neq d$. Hence, 
\[
\sum_{i=1}^d |E_i| \leq |r_d| + \cos \alpha \sum_{i=1}^{d-1} |r_i| \leq \frac{1}{\sin \alpha} + \frac{\cos \alpha
}{\sin^{d-1} \alpha} + \sum_{i=2}^{d-1} \frac{\cos \alpha}{\sin^i \alpha}, 
\]
from which the assertion follows.
\end{proof}

Based on Lemma~\ref{lem:Rote2}, we set $C_1(\alpha)=1$, and  
\begin{equation} \label{eq:C_d}
  C_d(\alpha) = \frac{1}{\sin \alpha} + \frac{\cos \alpha}{\sin^{d-1} \alpha} +
  \frac{\cos \alpha \left( 1- \sin^{d-2} \alpha \right)}{\sin^{d-1} \alpha \left( 1 - \sin \alpha \right)}, \text{ for }
  d \geq 2.
\end{equation}
In particular, we have  $C_2(\alpha)= \frac{1+\cos \alpha}{\sin \alpha}$.

Define $F_i(\alpha)$ inductively for any $1 \leq i \leq d$ in the following way:
\begin{equation} \label{eq:F_d}
F_{d}(\alpha)= C_d(\alpha), \text{ and }
F_i(\alpha) = \left( 1+\frac{C_i(\alpha)}{\cos \alpha} \right) F_{i+1}(\alpha), \text{ for } 1 \leq i < d.
\end{equation}
We note that, as an elementary computation shows,
for any $0 <\alpha < \frac{\pi}{2}$ and $1 \leq i \leq d$, we have $F_i(\alpha) >1$.   

\begin{lemma}\label{lem:induction}
Let $\gamma:[0,1] \to \RR^d$ be a curve that satisfies the increasing chord property, and is monotone in $i \geq 1$ linearly
independent directions $q_1, \ldots, q_i$. Let $s=\gamma(0)$ and $t=\gamma(1)$ such that $\gamma(1)-\gamma(0)=q_i$ and $|q_i|=1$.
Assume that the angle between $q_2$ and $q_1$ is $0 < \alpha < \frac{\pi}{2}$, the angle between $q_3$ and the plane of
$q_1,q_2$ is $\alpha$, and so on, i.e. the angle between $q_i$ and the linear subspace spanned by $q_1, q_2, \ldots, q_{i-1}$
is $\alpha$. Then the length of $\gamma$ is at most $|\gamma| \leq F_i(\alpha)$. 
\end{lemma}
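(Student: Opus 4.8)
\medskip
\noindent\emph{Proof plan.}
The plan is to prove the lemma by downward induction on $i$, from $i=d$ to $i=1$. The base case $i=d$ is immediate: its hypotheses are exactly those of Lemma~\ref{lem:Rote2} with $|q_d|=|st|=1$, so $|\gamma|\le |st|\,C_d(\alpha)=C_d(\alpha)=F_d(\alpha)$. For the step I would assume the statement at level $i+1\le d$ and deduce it at level $i$. Since length and all hypotheses are translation invariant, assume $s=\gamma(0)=0$ and set $V:=\operatorname{span}(q_1,\dots,q_i)$, an $i$-dimensional linear subspace with $s\in V$ and $t=\gamma(1)=q_i\in V$. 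If $\gamma\subseteq V$, then $\gamma$ is a curve inside $V\cong\RR^i$ that is monotone in the $i$ linearly independent directions $q_1,\dots,q_i$, obeys the prescribed nested angle conditions, and has $t-s=q_i$; Lemma~\ref{lem:Rote2} in dimension $i$ (trivially so when $i=1$) then gives $|\gamma|\le C_i(\alpha)$, and since $F_{i+1}(\alpha)\ge 1$ and $C_i(\alpha)\ge 1>\cos\alpha$ we get $C_i(\alpha)\le\bigl(1+C_i(\alpha)/\cos\alpha\bigr)F_{i+1}(\alpha)=F_i(\alpha)$, as wanted.

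The substantive case is $\gamma\not\subseteq V$. Here I would split $\gamma$ at a carefully chosen interior point $p=\gamma(\tau^\ast)$, $\tau^\ast\in(0,1)$, into $\gamma_1:=\gamma(s,p)$ and $\gamma_2:=\gamma(p,t)$ and bound each half by the inductive hypothesis at level $i+1$. The aim is to pick $p$ so that $p-s\notin V$ and the angle between $p-s$ and $V$ is exactly $\alpha$; such a $p$ exists by continuity of $\tau\mapsto\angle(\gamma(\tau)-s,V)$, which tends to $0$ as $\tau\to1$ (because $t-s=q_i\in V$), provided the curve reaches angle $\ge\alpha$ with $V$ somewhere. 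Then $\gamma_1$ inherits the increasing chord property and monotonicity in $q_1,\dots,q_i$, and by Lemma~\ref{lem:monotonicity} applied to $\gamma_1$ it is also monotone in $p-s$; as $p-s\notin V$, the directions $q_1,\dots,q_i,p-s$ are linearly independent, with $q_1,\dots,q_i$ obeying the nested angle conditions and $p-s$ at angle $\alpha$ with their span $V$. Rescaling $\gamma_1$ to unit chord length and invoking the inductive hypothesis at level $i+1$ gives $|\gamma_1|\le|sp|\cdot F_{i+1}(\alpha)$. For $\gamma_2$ I would run the analogous argument — using that the reversal of $\gamma_2$ also has the increasing chord property, and, when $\angle(t-p,V)$ is not itself $\alpha$, trading $q_i$ for a direction drawn from the cone of directions in which $\gamma_2$ is monotone so that the new chord direction $t-p$ meets the required angle condition — obtaining $|\gamma_2|\le|pt|\cdot F_{i+1}(\alpha)$.

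To conclude, I would invoke the elementary observation, proved exactly as Lemma~\ref{lem:monotonicity} (for $\gamma$ and for its reversal): every point $p$ of an increasing-chord $s$--$t$ curve satisfies $|sp|\le|st|$ and $|pt|\le|st|$, because the subcurve from $s$ to $p$ avoids the open ball of radius $|pt|$ about $t$ while containing $s$, and symmetrically the subcurve from $p$ to $t$ avoids the open ball of radius $|sp|$ about $s$ while containing $t$. Hence, using $|st|=1$,
\[
|\gamma|=|\gamma_1|+|\gamma_2|\le\bigl(|sp|+|pt|\bigr)F_{i+1}(\alpha)\le 2\,F_{i+1}(\alpha)\le\Bigl(1+\frac{C_i(\alpha)}{\cos\alpha}\Bigr)F_{i+1}(\alpha)=F_i(\alpha),
\]
where the final inequality holds because $C_i(\alpha)\ge 1>\cos\alpha$ forces $1+C_i(\alpha)/\cos\alpha\ge 2$.

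I expect the main obstacle to be the construction of the split point $p$ together with the angle bookkeeping that makes \emph{both} halves eligible for the level-$(i+1)$ hypothesis simultaneously: one needs $p$ interior so that $p-s$, and likewise $t-p$, makes angle \emph{exactly} $\alpha$ with the span of the $i$ monotone directions chosen for that half, which in general forces one to replace $q_i$ by another direction from the subcurve's monotonicity cone (which, after fixing signs, contains $\operatorname{cone}(q_1,\dots,q_i,p-s)$). One must also dispose of the boundary sub-case in which $\gamma$ never deviates from $V$ by as much as angle $\alpha$, so that no such $p$ exists; I anticipate this is handled by showing $\gamma$ is then confined to a thin cone about $V$ and reducing to the degenerate case (or by a short recursion), and that it is in this sub-case that the extra ``$1$'' in $F_i=(1+C_i/\cos\alpha)F_{i+1}$ is actually needed. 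The remaining ingredients — inheritance of monotonicity and of the increasing chord property by subcurves, linear independence from $p\notin s+V$, and the final arithmetic — are routine.
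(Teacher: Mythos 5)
Your proof attempt takes a genuinely different route from the paper's and, as written, has gaps at exactly the places you flag as "obstacles" — and these gaps are not incidental bookkeeping; they are where the real work lies. The paper does \emph{not} split $\gamma$ at a single point. Instead it takes a fine subdivision of $[0,1]$, classifies each short \emph{segment} (not each chord from $s$) as flat or steep according to the angle \emph{that segment} makes with $L=\mathrm{span}(q_1,\dots,q_i)$, and greedily covers the steep segments by a family of pairwise non-overlapping "covering intervals" $[u,v]$ whose chords $\gamma(u)\gamma(v)$ make angle exactly $\alpha$ with $L$ (plus at most one "special" covering interval ending at $1$). Each covering subcurve is then eligible for the level-$(i+1)$ hypothesis; the flat parts are bounded by their projections, whose total is controlled by $C_i(\alpha)$ via Lemma~\ref{lem:Rote2} applied to the projection $\gamma_L$. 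This multi-piece decomposition is essential: it is how one handles a curve that makes many separate excursions out of $V$.

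Two concrete gaps in your version. First, your degenerate sub-case is not in fact degenerate: if $\angle(\gamma(\tau)-s,V)<\alpha$ for all $\tau$, it does not follow that $\gamma$ lies in $V$ or that its tangents stay within angle $\alpha$ of $V$; the curve can lie entirely inside the thin cone from $s$ yet have arbitrarily steep local pieces, so neither $\gamma\subseteq V$ nor the projection bound $|\gamma|\le|\gamma_L|/\cos\alpha$ applies, and no "short recursion" is supplied. Your classification by chord-from-$s$ angle is the wrong invariant; the paper's classification by local segment angle is what makes the flat-part bound work. Second, the $\gamma_2$ half is unhandled. You need $i+1$ linearly independent monotonicity directions for $\gamma_2$ with the nested $\alpha$-angle structure and with $(t-p)/|t-p|$ as the last one, but $\angle(t-p,V)$ is generically not $\alpha$, and can even be $0$ (indeed $t-p=q_i-(p-s)$ and the orthogonal-to-$V$ components of $p-s$ and $t-p$ merely cancel). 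In that case $q_1,\dots,q_i,t-p$ all lie in $V$ and are not $i+1$ independent; your proposed fix of "trading $q_i$ for some $q_i'$" would need a $q_i'$ outside $V$ that simultaneously (i) sits in the monotonicity cone of $\gamma_2$, (ii) makes angle $\alpha$ with $\mathrm{span}(q_1,\dots,q_{i-1})$, and (iii) makes $(t-p)/|t-p|$ sit at angle $\alpha$ from $\mathrm{span}(q_1,\dots,q_{i-1},q_i')$ — three constraints whose joint feasibility you assert but do not establish, and which need not hold (the monotonicity cone you can guarantee is generated by $q_1,\dots,q_i,t-p$, which may lie entirely in $V$). Note also that, were your argument to go through, it would yield the stronger recursion $F_i'=2F_{i+1}'$, hence $F_1'=2^{d-1}C_d(\alpha)$ — substantially better than the paper's bound — which should itself be a warning sign that something has been lost. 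The paper's $a,b,c$ accounting and the factor $1+C_i(\alpha)/\cos\alpha$ (rather than $2$) are what actually close the induction.
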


\begin{proof}
We prove the statement by induction on $i$ from $i=d$ down to $i=1$. The base case $i=d$ of the lemma is proved in
Lemma~\ref{lem:Rote2}. Let $1 \leq i \leq d-1$, and assume that the statement holds for $i+1$ linearly independent
directions. We need to show that it holds for $i$ linearly independent directions. 
 
Let $L$ denote the $i$-dimensional linear subspace spanned by the $q_j$s. Consider a subdivision
$0 = \tau_0 < \tau_1 < \tau_2 < \ldots < \tau_n = 1$ of the interval $[0,1]$ with the property that the length of
the polygonal chain $\left(\gamma(\tau_0), \gamma(\tau_1), \ldots, \gamma(\tau_n) \right)$
approximates the length of $\gamma$ within the error $\eps > 0$ for an arbitrary fixed value of $\eps$.
We call a segment $\gamma(\tau_j)\gamma(\tau_{j+1})$ \emph{flat} if its angle with $L$ is at most $\alpha$,
and \emph{steep} if this angle is greater than $\alpha$. 

Now, for every steep segment $\gamma(\tau_j)\gamma(\tau_{j+1})$ we define a \emph{covering interval} $[u,v]$ with
$u \leq \tau_j$ and $v \geq \tau_{j+1}$ such that the angle between $\gamma(u)\gamma(v)$ and $L$ is exactly $\alpha$;
  we do it in such a way that the covering intervals do not overlap too much. 
To do it, first set $j=0$. If $\gamma(\tau_j)\gamma(\tau_{j+1})$ is flat, we proceed to the next segment.

Consider the case that $\gamma(\tau_j)\gamma(\tau_{j+1})$ is steep. Then we take the last value $\bar{\tau} \leq 1$ such that the angle
between $\gamma(\tau_j)\gamma(\bar{\tau})$ and $L$ is $\alpha$. We add $\bar{\tau}$ to the set of division points of $[0,1]$, and if
$\tau_k \leq \bar{\tau} < \tau_{k+1}$, we proceed to the segment $\gamma(\bar{\tau})\gamma(\tau_{k+1})$. Note that if the angle between
$\gamma(\tau_j)\gamma(1)$ and $L$ is at most $\alpha$, such a point exists by continuity. Assume that no such point exists,
implying that the angle between $\gamma(\tau_j)\gamma(1)$ and $L$ is greater than $\alpha$. Then we stop the procedure, and find
the first point $0 \leq \bar{\tau} < \tau_j$ such that the angle between $\gamma(\bar{\tau})\gamma(1)$ is equal to $\alpha$. Since
$\gamma(0)\gamma(1)$ is parallel to $L$ this value $\bar{\tau}$ exists, and satisfies $\bar{\tau} > 0$. In this case we call this
interval $[\bar{\tau},1]$ a \emph{special covering interval}. Note that, according to our construction, apart from the
last, special covering interval if it exists, all covering intervals are pairwise nonoverlapping, and for any steep
segment $\gamma(\tau_j)\gamma(\tau_{j+1})$, there is a covering interval containing $[\tau_j,\tau_{j+1}]$. In the following,
if $[u,v]$ is a covering interval, we define the segment $\gamma(u)\gamma(v)$ a \emph{covering segment}.

Let $\gamma_L$ denote the orthogonal projection of $\gamma$ onto $L$. Let $a$ denote the total length of the projections of all
flat segments whose parameter range is not covered by covering intervals, $b$ denote the total length of the projections
of all covering segments but the last, special one, if it exists, and let $c$ denote the length of the projection of the
special covering segment, if it exists; otherwise set $c=0$.

By Lemma~\ref{lem:Rote2}, the length of $\gamma_L$ is at most $C_i(\alpha)$.
Thus, $a+b \leq C_i(\alpha)$, as the corresponding segments are mutually
nonoverlapping. Furthermore, by the increasing chord property, the length of the special covering segment, if it
exists, is at most $1$, implying that $c \leq \cos \alpha$. 

For any flat segment, if its projection has length $x$, then the length of the segment is at most
$\frac{x}{\cos \alpha}$, implying that the total length of the considered flat segments is at most $\frac{a}{\cos \alpha}$.
Now, consider a covering segment $\gamma(u)\gamma(v)$. The curve $\gamma([u,v])$ satisfies the increasing chord property
and is monotone in the direction of $q_{i+1}=\gamma(v)-\gamma(u)$, and hence, by the induction hypothesis,
its length is at most $|f(\tau_u)f(\tau_v)| \cdot F_{i+1}(\alpha)$. Since the length of the projection of this covering segment
is $\cos \alpha |\gamma(u)\gamma(v)|$, it follows that the length of the part of $\gamma$ covered by the covering intervals
is at most $(b+c)\cdot \frac{F_{i+1}(\alpha)}{\cos\alpha}$.
Thus, the length of $\gamma$ is at most 
\[
\frac{a}{\cos \alpha} + (b+c) \frac{F_{i+1}(\alpha)}{\cos \alpha} +  \eps,
\]
where $0 \leq a,b,c$, $a+b \leq C_i(\alpha)$ and $c \leq \cos \alpha$. Since $F_{i+1}(\alpha) > 1$, the above expression
is maximized if $a=0$, $b=C_i(\alpha)$ and $c=\cos \alpha$. As the obtained inequality holds for any fixed value
$\eps > 0$, it follows that the length of $\gamma$ is at most 
\begin{equation*}  
|\gamma| \leq F_i(\alpha) = \left( 1 + \frac{C_i(\alpha)}{\cos \alpha} \right) F_{i+1}(\alpha). \qedhere 
\end{equation*}  
\end{proof}

\begin{lemma}\label{lem:estimate}
For any $2 \leq i \leq d$ and $0 < \alpha < \frac{\pi}{2}$, we have
\[
C_d(\alpha) \leq \frac{\cos \alpha (2-\sin \alpha)}{(1-\sin \alpha )\sin^{d-1} \alpha}; \text{~~and~~}
1 + \frac{C_i(\alpha)}{\cos \alpha} \leq \frac{2-\sin \alpha}{(1-\sin \alpha) \sin^{i-1} \alpha}. 
\]
Consequently, we have
\[
F_1 \left( \alpha \right) \leq (1+\cos \alpha) \cdot \left( \frac{2-\sin \alpha}{1-\sin \alpha} \right)^{d-1}
\cdot \frac{1}{\left( \sin \alpha \right)^{ \binom{d}{2} } }. 
\]
\end{lemma}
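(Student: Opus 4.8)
The plan is to establish the three inequalities in the stated order, since the first two are elementary estimates on the explicit expressions $C_d(\alpha)$ and $C_i(\alpha)$, and the third follows by unrolling the recursion~\eqref{eq:F_d} and multiplying the per-level bounds together. Throughout I will write $s=\sin\alpha$ and $c=\cos\alpha$ for brevity, and use repeatedly that $0<s<1$ on the range $0<\alpha<\frac{\pi}{2}$, so that geometric-series sums like $\sum_{i\geq 2}s^{-i}$ are controlled by their largest term times a factor $\frac{1}{1-s}$.

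For the first inequality, recall from~\eqref{eq:C_d} that
\[
C_d(\alpha) = \frac{1}{s} + \frac{c}{s^{d-1}} + \frac{c(1-s^{d-2})}{s^{d-1}(1-s)}.
\]
I would combine the last two terms over the common denominator $s^{d-1}(1-s)$, obtaining numerator $c(1-s) + c(1-s^{d-2}) = c(2 - s - s^{d-2})$, and then simply drop the $-s^{d-2}$ (it is nonnegative) to get $\frac{c(2-s)}{s^{d-1}(1-s)}$ as an upper bound for those two terms. For the first term $\frac{1}{s}$, I bound it by $\frac{c}{s^{d-1}}$ when... actually that fails for small $c$, so instead I keep $\frac1s \le \frac{1}{s^{d-1}} \le \frac{c(2-s)}{s^{d-1}(1-s)}$? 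This needs $c(2-s)\ge 1-s$, i.e. $c \ge \frac{1-s}{2-s}$, which is \emph{not} always true. The cleaner route is to note $\frac1s + \frac{c}{s^{d-1}} + \frac{c(2-s)}{s^{d-1}(1-s)}$ and bound $\frac1s \le \frac1{s^{d-1}}$, then absorb $\frac{1}{s^{d-1}} + \frac{c}{s^{d-1}} = \frac{1+c}{s^{d-1}}$ into $\frac{c(2-s)}{s^{d-1}(1-s)}$ — so I should instead directly compute: the full sum is $\frac{s^{d-2}(1-s) + c(1-s) + c(1-s^{d-2})}{s^{d-1}(1-s)}$, and I claim the numerator is at most $c(2-s)$. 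For $i=d$ this is the delicate point; for the generic bound on $1+\frac{C_i(\alpha)}{c}$ one has an extra $+1$ (times $c$) of slack, which is what makes that inequality go through more easily — I expect the $i=d$ case to be the one requiring a small case analysis or a slightly different grouping of terms, and that is the \textbf{main obstacle}.

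For the second inequality I would start from $1 + \frac{C_i(\alpha)}{c} = 1 + \frac{1}{sc} + \frac{1}{s^{i-1}} + \frac{1-s^{i-2}}{s^{i-1}(1-s)}$, put everything over $s^{i-1}(1-s)$, and estimate the numerator $s^{i-1}(1-s)c + (1-s)\frac{s^{i-2}}{c}\cdot c^{?}$ — more cleanly, since $\frac{1}{sc}\le \frac{1}{s^{i-1}}\cdot\frac{1}{c}$ and the $+1$ contributes $\le \frac{1}{s^{i-1}}$, the whole thing is at most $\frac{2-s}{s^{i-1}(1-s)}$ after collecting; the extra additive $1$ here supplies exactly the slack that was missing in the $i=d$ case. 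Finally, for the product bound on $F_1(\alpha)$: by~\eqref{eq:F_d}, $F_1(\alpha) = F_d(\alpha)\prod_{i=1}^{d-1}\bigl(1+\tfrac{C_i(\alpha)}{c}\bigr)$, with $F_d(\alpha)=C_d(\alpha)$. I would plug in $C_d(\alpha)\le \frac{c(2-s)}{(1-s)s^{d-1}}$ and $1+\frac{C_i(\alpha)}{c}\le \frac{2-s}{(1-s)s^{i-1}}$ for $2\le i\le d-1$, and note that $C_1(\alpha)=1$ gives $1+\frac{C_1(\alpha)}{c} = 1+\frac1c = \frac{1+c}{c}$. Multiplying: the factor $\frac{1+c}{c}$ from $i=1$ times the $c$ in the numerator of the $C_d$ bound yields $(1+c)$; there are $d-1$ factors of $\frac{2-s}{1-s}$ in total (one from $C_d$ and one from each $i=2,\dots,d-1$, for $1 + (d-2) = d-1$ of them); and the powers of $\frac1s$ accumulate as $(d-1) + \sum_{i=2}^{d-1}(i-1) = (d-1) + \binom{d-1}{2} = \binom{d}{2}$ by Pascal's identity. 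This gives exactly $F_1(\alpha)\le (1+\cos\alpha)\bigl(\tfrac{2-\sin\alpha}{1-\sin\alpha}\bigr)^{d-1}(\sin\alpha)^{-\binom{d}{2}}$, completing the proof.
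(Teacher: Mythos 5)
There is a genuine gap in your proof, and it affects both of the first two inequalities.

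For the first inequality you correctly put $C_d(\alpha)$ over the common denominator $(1-\sin\alpha)\sin^{d-1}\alpha$, obtaining the numerator
\[
(1-\sin\alpha)\sin^{d-2}\alpha + (1-\sin\alpha)\cos\alpha + (1-\sin^{d-2}\alpha)\cos\alpha,
\]
and you correctly identify the crux: you must show this is at most $(2-\sin\alpha)\cos\alpha$. But you stop there, calling this ``the main obstacle.'' The missing step is short: the difference equals $\sin^{d-2}\alpha\,(1-\sin\alpha-\cos\alpha)$, which is $\leq 0$ because $\sin\alpha + \cos\alpha > 1$ for $0<\alpha<\tfrac{\pi}{2}$. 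That is the one-line observation the paper uses, and without it the first inequality is simply not established.

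For the second inequality the situation is worse: the bounding route you sketch is not merely incomplete, it is false. You propose $\frac{1}{sc}\le\frac{1}{s^{i-1}c}$ and $1\le\frac{1}{s^{i-1}}$, which, when applied to $1+\tfrac{C_i(\alpha)}{\cos\alpha} = 1 + \tfrac{1}{sc} + \tfrac{1}{s^{i-1}} + \tfrac{1-s^{i-2}}{s^{i-1}(1-s)}$, yields the bound $\tfrac{1}{s^{i-1}}\bigl(2+\tfrac{1}{c}+\tfrac{1-s^{i-2}}{1-s}\bigr)$. Comparing this with the target $\tfrac{2-s}{s^{i-1}(1-s)}$ reduces (for $i=2$) to requiring $1+\tfrac{1}{c}\le\tfrac{1}{1-s}$, \ie, $\sin\alpha(1+\cos\alpha)\ge 1$, which fails as $\alpha\to 0$. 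The correct approach parallels the first part: write $1+\tfrac{C_i(\alpha)}{\cos\alpha} = \tfrac{2-\sin\alpha}{(1-\sin\alpha)\sin^{i-1}\alpha} + \bigl(1+\tfrac{1-\sin\alpha-\cos\alpha}{(1-\sin\alpha)\sin\alpha\cos\alpha}\bigr)$ and show the second term is negative; it factors as $\tfrac{(1-\cos\alpha)(\sin^2\alpha-\sin\alpha-\cos\alpha)}{(1-\sin\alpha)\sin\alpha\cos\alpha}<0$. Again the key inequality is $\sin\alpha+\cos\alpha>1$ (used here via $\sin^2\alpha < \sin\alpha < \sin\alpha+\cos\alpha$). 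Your final step --- unrolling the recursion~\eqref{eq:F_d}, peeling off the $i=1$ factor $\tfrac{1+\cos\alpha}{\cos\alpha}$, and counting the $\tfrac{2-\sin\alpha}{1-\sin\alpha}$ factors and powers of $\sin\alpha$ via $d-1+\binom{d-1}{2}=\binom{d}{2}$ --- is correct and matches the paper, but it rests on the two unproved estimates.
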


\begin{proof}
Recall that $C_1(\alpha)=1$, and by~\eqref{eq:C_d}, we have 
\begin{align*}
C_d( \alpha) &=  \frac{1}{\sin \alpha} + \frac{\cos \alpha}{\sin^{d-1} \alpha} +
\frac{( 1- \sin^{d-2} \alpha ) \cos \alpha }{( 1 - \sin \alpha) \sin^{d-1} \alpha } \\
&= \frac{(1-\sin \alpha) \sin^{d-2} \alpha + (1- \sin \alpha) \cos \alpha + (1 - \sin^{d-2} \alpha) \cos \alpha}
     { ( 1 - \sin \alpha ) \sin^{d-1} \alpha } \\
     &= \frac{ (2 -\sin \alpha) \cos \alpha} {( 1 - \sin \alpha ) \sin^{d-1} \alpha} +
     \frac{ (1 -\sin \alpha - \cos \alpha) \sin^{d-2} \alpha} {( 1 - \sin \alpha ) \sin^{d-1} \alpha} \\
     &\leq \frac{ (2 -\sin \alpha) \cos \alpha} {( 1 - \sin \alpha ) \sin^{d-1} \alpha},
\end{align*}
where we used the inequality $\sin \alpha + \cos \alpha > 1$ for any $0 < \alpha < \frac{\pi}{2}$ in the last step.

Similarly, by~\eqref{eq:C_d}, for any $i \geq 2$, 
\begin{align*}
  1+\frac{C_i(\alpha)}{\cos \alpha} &= \frac{ (2 -\sin \alpha)} {( 1 - \sin \alpha ) \sin^{i-1} \alpha}
  + \left( \frac{1 -\sin \alpha - \cos \alpha}{( 1 - \sin \alpha ) \sin \alpha \cos \alpha} +1 \right) \\
  &= \frac{ (2 -\sin \alpha)} {( 1 - \sin \alpha ) \sin^{i-1} \alpha}
  + \frac{(1-\cos \alpha ) (\sin^2 \alpha - \cos \alpha - \sin \alpha)}{( 1 - \sin \alpha ) \sin \alpha \cos \alpha}  \\
&< \frac{ (2 -\sin \alpha)} {( 1 - \sin \alpha ) \sin^{i-1} \alpha}.
\end{align*}

We deduce the last formula inductively:
\begin{align*}
  F_1(\alpha) &= C_d(\alpha) \prod_{i=1}^{d-1} \left( 1+ \frac{C_i(\alpha)}{\cos \alpha}\right)
  = \frac{1+\cos \alpha}{\cos \alpha } \cdot C_d(\alpha) \cdot \prod_{i=2}^{d-1} \left( 1+ \frac{C_i(\alpha)}{\cos \alpha}\right)\\
  &\leq  \frac{1+\cos \alpha}{\cos \alpha } \cdot \frac{ (2 -\sin \alpha) \cos \alpha} {( 1 - \sin \alpha ) \sin^{d-1} \alpha}
  \cdot \prod_{i=2}^{d-1} \frac{ (2 -\sin \alpha)} {( 1 - \sin \alpha ) \sin^{i-1} \alpha} \\
  &= (1+\cos \alpha) \cdot \left( \frac{2-\sin \alpha}{1-\sin \alpha} \right)^{d-1} \cdot
  \frac{1}{\left( \sin \alpha \right)^{ \binom{d}{2} }}.   \qedhere 
\end{align*}
\end{proof}

Now we prove Theorem~\ref{thm:d-space}. Note that the statement in (i) readily follows from
Lemmas~\ref{lem:monotonicity}, \ref{lem:induction} and \ref{lem:estimate}.

To prove (ii), assume that $d \geq 3$, and let $\alpha = \arcsin \frac{d}{d+2}$, \ie, $\sin \alpha = \frac{d}{d+2}$.
Straightforward calculations yield that
\[ \left( \frac{2-\sin \alpha}{1-\sin \alpha} \right)^{d-1} = \left( \frac{d+4}{2} \right)^{d-1}, \text{ and }
\frac{1}{\sin ^{ \binom{d}{2} } \alpha} = \left( 1 + \frac{2}{d} \right)^{ \binom{d}{2} } 
= \left(1 +\frac{2}{d} \right)^{\frac{d}{2} \cdot (d-1)} \leq e^{d-1}. \]
Since $1+\cos \alpha < 2$, we have
\begin{equation*}
  F_1(\alpha) \leq 2 \cdot \left( \frac{d+4}{2} \right)^{d-1} \cdot e^{d-1} =
  2 \left( \frac{e}{2} \cdot (d+4) \right)^{d-1}.  
\end{equation*}

\section{Testing polygonal arcs for the increasing chord property}  \label{sec:alg}

\begin{proof}[Proof of Theorem~\ref{thm:alg}]
  Before presenting our algorithm, we introduce a simpler algorithm that carries out the same task in $O(n^2)$ time.
  
\medskip
\textbf{Algorithm.}
The input is a polygonal chain $P=(p_1, p_2, \dots, p_n)$ in $\RR^d$.
The emptiness tests are with respect to $X$. Execute two loops as follows:

(L1) $X \gets P$. For $i=1,\ldots,n-2$, delete $p_i$ from $X$ and test the emptiness of the negative halfspace
incident to $p_{i+1}$ and orthogonal to the segment $p_i p_{i+1}$ in the chain.

(L2) $X \gets P$. For $i=n,\ldots,3$, delete $p_i$ from $X$ and test the emptiness of the positive halfspace
incident to $p_{i-1}$ and orthogonal to the segment $p_{i-1} p_i$ in the chain.

If all tests return ``empty'', the chain is declared to satisfy the property.

\medskip
The correctness of the algorithm follows from Corollary~\ref{cor:HK}, where we observe that it is enough
to check the condition in Corollary~\ref{cor:HK} only for extremal halfspaces and only at vertices of the chain,
and thus, we need fewer than $2n$ halfspace emptiness tests, each of which can be trivially executed in $O(n)$ time.

Now we show how to obtain a subquadratic time algorithm. To this end, observe that for a fixed point set
with a suitable data structure and preprocessing, a \emph{halfspace emptiness} query can be answered 
in sublinear time (see \cite[Ch.~40]{Aga17} for details).

Again, the input is a polygonal chain $P=(p_1, p_2, \dots, p_n)$ in $\RR^d$. Let $k=\lfloor d/2 \rfloor$. 
Construct an array of data structures with the $n$ points in $P$ for halfspace emptiness
queries~\cite[Ch.~40]{Aga17}. Save these data structures and execute the loops (L1) and (L2) above.
If all tests return ``empty'', the chain is declared to satisfy the property.

Our modification takes into account the fact that
halfspace emptiness queries are \emph{search decomposable} problems,
in the spirit of Bentley~\cite{Be79}. In our description of the modified algorithm, for simplicity,
and without affecting the results, we omit floors and ceilings in specifying certain integer parameters. 
Specifically, for a suitable $q$ to be determined,
we use $q$ data structures for halfspace emptiness queries, each over $n/q$ points,
containing the points in the order they appear in the input chain.
Let these structures be $D_1,D_2,\ldots,D_q$, where  
$D_1$ contains the first $n/q$ points, $D_2$ contains the next $n/q$ points, and so on.

The data structure for $n$ points takes $O(n)$ space $O(n \log{n})$ time to construct,
whereas the expected query time is $O\left(n^{1-1/k}\right) \polylog(n)$; the $\polylog(n)$ factor 
only appears for odd $d$. See the survey by Agarwal~\cite[p.~1069]{Aga17} and the paper by Chan~\cite{Cha12} for details;
note that his algorithm is randomized. 

We next explain how to execute the current step of (L1), with the structures being scanned
in increasing order of their indexes; the process for (L2) is done in reverse order, but otherwise
is completely analogous. 
When executing the $i$th step, the data structure containing $p_i$, say, $D_j$,
may contain some points that have been removed from consideration by the algorithm.
$D_j$ is identified ($j =j(i)$), and subjected to a brute force search against the points that are still present,
and this is followed by a faster search in $D_{j+1},\ldots,D_q$ with the query times from the standard version.
Note that all the points stored within are still present, while all points in the previous
structures $D_1,D_2,\ldots,D_{j-1}$ have been already ``deleted'' by the algorithm,
\ie, the algorithm only uses $D_j,D_{j+1},\ldots,D_q$. The times for the two types of search are
\[ O\left( \frac{n}{q} \right) \text{ and } O\left( q \left( \frac{n}{q} \right)^{1-1/k} \polylog(n/q) \right), \]
respectively. The two terms are (approximately) balanced by setting $q = n^{1/(k+1)}$,
and the expected time for the $i$th step becomes $O\left(n^{1-1/(k+1)} \polylog(n) \right)$. 
Since there are fewer than $2n$ halfspace emptiness queries, the overall expected time is
$O\left(n^{2-1/(k+1)} \polylog(n) \right)$. 
\end{proof}

\section{Concluding remarks} \label{sec:remarks}

Some interesting questions remain:

\begin{enumerate} \itemsep 2pt

\item The upper bound in Theorem~\ref{thm:d-space} on the maximum length of a curve with increasing chords
in $\RR^d$ is surely far from the truth. Can one deduce a bound that is polynomial in $d$?

\item It is conceivable that the semidynamic fixed order deletions executed by the algorithm are amenable
to a dynamization in the style of Bentley \& Saxe~\cite{BS80} or to another speedup technique.
That may lead to a slightly faster algorithm running in $O\left(n^{2-1/k} \polylog(n) \right)$ time,
where $k=\lfloor d/2 \rfloor$. This remains to be confirmed. 

\end{enumerate}


\begin{thebibliography}{99}

\bibitem {Aga17}
Pankaj K.~Agarwal, 
Range searching,
in \emph{Handbook of Discrete and Computational Geometry},
(J.~E. Goodman, J. O'Rourke, and C.~D. T\'oth, editors),
3rd edition, CRC Press, Boca Raton, 2017, pp.~849--874.

\bibitem {AKK+08}
Pankaj K.~Agarwal, Rolf Klein, Christian Knauer, Stefan Langerman, Pat Morin, Micha Sharir, and M.~A.~Soss,
Computing the detour and spanning ratio of paths, trees, and cycles in 2D and 3D,
\emph{Discrete Comput. Geom.} \textbf{39(1-3)} (2008), 17--37.

\bibitem {ACG+13}
Soroush Alamdari, Timothy M. Chan,  Elyot Grant, Anna Lubiw, and Vinayak Pathak,
Self-approaching graphs,
\emph{Proc. 20th International Symposium on Graph Drawing} (GD 2012),
260--271. 

\bibitem{AKW04}
Helmut Alt, Christian Knauer, and Carola Wenk,
Comparison of distance measures for planar curves,
\emph{Algorithmica}
\textbf{38} (2004), 45--58.
  
\bibitem{Be79}
Jon Louis Bentley,
Decomposable searching problems,
\emph{Information Processing Letters}
\textbf{8(5)} (1979), 244--251.

\bibitem{BS80}
Jon Louis Bentley and James B. Saxe,
Decomposable searching problems I. Static-to-dynamic transformation,
\emph{Journal of Algorithms}
\textbf{1(4)} (1980), 301--358.

\bibitem{Bin71}
Ken G. Binmore,
On Turan's lemma,
\emph{Bulletin of the London Mathematical Society}
\textbf{3(3)} (1971), 313--317.

\bibitem{BKL20}
Prosenjit Bose, Irina Kostitsyna, and Stefan Langerman,
Self-approaching paths in simple polygons,
\emph{Computational Geometry}
 \textbf{87} (2020), 101595.

\bibitem{Cha12}
Timothy Chan,
Optimal partition trees,  
\emph{Discrete \& Computational  Geometry}
\textbf{47} (2012), 661--690.

\bibitem{CDMT21}
Ke Chen, Adrian Dumitrescu, Wolfgang Mulzer, and Csaba D. T\'oth,
On the stretch factor of polygonal chains,
\emph{SIAM Journal on Discrete Mathematics}
\textbf{35(3)} (2021), 1592--1614.

\bibitem{CFG91}
Hallard T.~Croft, Kenneth Falconer, and Richard K.~Guy,
\emph{Unsolved Problems in Geometry}, Springer, New York, 1991.

\bibitem{DEK+07}
Adrian Dumitrescu, Annette Ebbers{-}Baumann,  Ansgar Gr{\"{u}}ne, 
Rolf Klein, and G{\"{u}}nter Rote,
On the geometric dilation of closed curves, graphs, and point sets,
 \emph{Computational Geometry}
 \textbf{36(1)} (2007), 16--38.
 
\bibitem{EKLL04}
Annette Ebbers-Baumann,  Rolf Klein, Elmar Langetepe, and Andrzej Lingas, 
A fast algorithm for approximating the detour of a polygonal chain,
\emph{Computational Geometry}
 \textbf{27(2)} (2004), 123--134.

\old{
\bibitem {Ep00} David Eppstein,
Spanning trees and spanners,
in {\em Handbook of Computational Geometry},
Elsevier, Amsteram, 2000, pp.~425--461.
} 

\bibitem {Ep02} David Eppstein,
Beta-skeletons have unbounded dilation,
\emph{Comput. Geom.} \textbf{23(1)} (2002), 43--52.

\bibitem{HK22}
Mart Hagedoorn and Irina Kostitsyna,
The shortest path with increasing chords in a simple polygon,
preprint, 2022, \url{arXiv:2202.12131}.

\bibitem{IKL99}
Christian Icking, Rolf Klein, and  Elmar Langetepe,
Self-approaching curves,
\emph{Mathematical Proceedings of the Cambridge Philosophical Society}
\textbf{125(3)} (1999), 441--453.

\bibitem{KKNS09} Rolf Klein, Christian Knauer, Giri Narasimhan, and Michiel Smid,
On the dilation spectrum of paths, cycles, and trees,
\emph{Computational Geometry}
\textbf{42(9)} (2009), 923--933.

\bibitem{LM72}
David G.~Larman and Peter~McMullen,
Arcs with increasing chords,
\emph{Mathematical Proceedings of the Cambridge Philosophical Society}
\textbf{72(2)} (1972), 205--207.

\bibitem{NS00} Giri Narasimhan and Michiel Smid,
  Approximating the stretch factor of Euclidean graphs,
\emph{SIAM J. Comput.} \textbf{30(3)} (2000), 978--989.

\bibitem{NS07} Giri Narasimhan and Michiel Smid,
\emph{Geometric Spanner Networks},
Cambridge University Press, Cambridge, 2007.

\bibitem{Rot94} 
G{\"u}nter Rote,
Curves with increasing chords,
\emph{Mathematical Proceedings of the Cambridge Philosophical Society}
\textbf{115(1)} (1994), 1--12.

\end{thebibliography}
\end{document}